\newtheorem{theo}{Theorem}
\newtheorem{lemm}{Lemma}
\newtheorem{prop}{Proposition}
\newtheorem{rema}{Remark}
\renewcommand\section{\@startsection{section}{1}{\z@}%
	{-3.5ex \@plus -3ex \@minus -.2ex}%
	{2.3ex \@plus .2ex}%
	{\bfseries\centering}}% 
\renewcommand\subsection{\@startsection{subsection}{1}{\z@}%
	{-3.5ex \@plus -3ex \@minus -.2ex}%
	{2.3ex \@plus .2ex}%
	{\flushleft\textit}}% 
\newcommand{\abs}[1]{\left\vert#1\right\vert}
\newcommand{\R}{\mathbb{R}}
\newcommand{\diag}[1]{\text{diag}\left\{#1\right\}}
\newcounter{constant}
\begin{document}

\title{Influence of the topology on the dynamics of a complex network of HIV/AIDS epidemic models}

\author{Guillaume Cantin$^1$\\
	\texttt{guillaumecantin@mail.com}
	\and
	Cristiana J. Silva$^2$\\
	\texttt{cjoaosilva@ua.pt}
}

\date{$1$ - Laboratoire de Math\'ematiques Appliqu\'ees du Havre, Normandie Universit\'e,\\
	FR CNRS 3335, ISCN, 76600 Le Havre, France\\
	$2$ - Center for Research and Development in Mathematics and Applications (CIDMA),\\
	Department of Mathematics, University of Aveiro, 3810-193 Aveiro, Portugal}

\begin{abstract}
In this paper, we propose an original complex network model for an epidemic problem in an heterogeneous geographical area.
The complex network is constructed by coupling nonidentical instances of a HIV/AIDS epidemiological model for which a disease-free equilibrium and an endemic equilibrium can coexist.
After proving the existence of a positively invariant region for the solutions of the complex network problem, we investigate the effect of the coupling on the dynamics of the network, and establish the existence of a unique disease-free equilibrium for the whole network, which is globally asymptotically stable.
We prove the existence of an optimal topology that minimizes the level of infected individuals, and apply the theoretical results to the case of the Cape Verde archipelago.
\end{abstract}

\textbf{keywords:} Complex network, epidemiological model, basic reproduction number, graph topology, HIV/AIDS, Cape Verde
\newline
\textbf{Mathematics Subject Classification:} 34A34, 34C60, 92B05

\maketitle

\section{Introduction}

In the last decades, mathematical models have played a very important role on the study of 
the analysis of the spread of infectious diseases.
One of those infectious diseases, is the acquired immunodeficiency syndrome (AIDS), which is caused by the infection with the human immunodeficiency virus (HIV).
HIV continues to be a major global public health issue, having claimed more than 35 million lives so far.
In 2017, 940 000 people died from HIV-related causes globally.
There were approximately 36.9 million people living with HIV, at the end of 2017,
with 1.8 million people becoming newly infected in 2017 globally \cite{WHO}. HIV is spread all over the world, however the World Health Organization
African Region is the most affected region, with 25.7 million people living with HIV in 2017.
The African region also accounts for over two thirds of the global total of new HIV infections \cite{WHO}.
In this context,
numerous mathematical models have been proposed for HIV/AIDS transmission and other infectious diseases, but although they
can incorporate important informations about the characteristics of epidemic outbreaks, many of them do not take into account the heterogeneity of the geographical landscape
(see e.g. \cite{khanna2014what, kimbir2012amathematical,rahman2016impact, podder2011mathematical, rocha2018effect, silva2017EcoComplexity}).
However, this geographical heterogeneity, which seems to represent a key factor in understanding the spreading of infectious diseases,
can be studied through the \emph{complex networks} approach, which combines dynamical systems with graph theory,
in order to propose refined mathematical models.

Usually, a complex network is built by considering a graph, given by a set of vertices and a set of edges,
and by coupling each vertex with an instance of a given dynamical system, which can be determined by a set of differential equations.
Recent works have been devoted to complex networks for studying real-world applications, like neural networks
\cite{cao2006global, chen2004global,epperlein2013phase, zhang2009multiple},
behavioral networks \cite{cantin2018nonidentical,cantin2017control}
or biological networks \cite{lassig2001shape,li2010global}.
Other works have also been dedicated to the study of epidemiological problems and their relationship with social networks
\cite{keeling2005networks, may2001infection, moreno2002epidemic}.
The effect of the topology of the graph, which is determined by the disposal of its edges,
on the dynamics of the network and the possible synchronization state are widely studied
\cite{arenas2008synchronization, aziz2006synchronization, belykh2005synchronization,golubitsky2006nonlinear},
but it is still unknown for complex networks of nonidentical systems if one can find some topology that favors a particular dynamics.

In this paper, we propose an original model of complex network of nonidentical dynamical systems, in order to analyze the spread of epidemics within an heterogeneous geographical zone.
To this end, we consider a recent HIV/AIDS model, proposed in \cite{silva2017EcoComplexity}, given by a system of ordinary differential equations,
for which a basic reproduction number $R_0$ can be determined.
It has been proved in \cite{silva2017global}
that this refined model admits a \emph{disease-free equilibrium} (DFE),
which is globally asymptotically stable if $R_0 < 1$,
and an \emph{endemic equilibrium} (EE),
which is globally asymptotically stable if $R_0 > 1$.
Thus we construct a complex network by coupling nonidentical instances of the HIV/AIDS system proposed in \cite{silva2017EcoComplexity},
that is, multiple instances of the epidemiological system with distinct parameters values.
We construct the complex network so that it can take into account the situation
where a part of the population is not concerned with the migrations.
Moreover, we assume that the displacements can be different in some places of the network, that migrations are instantaneous,
and that individuals are not subject to an evolution from one compartment to another during migration from one node to another.
Up to our knowledge, these assumptions are a novelty on the mathematical modeling of the spread of HIV/AIDS epidemics.  
We consider a graph $\mathscr{G} = (\mathscr{V},\,\mathscr{E})$,
where the set of vertices $\mathscr{V}$ models the zones of high population concentration,
and the set of edges $\mathscr{E}$ corresponds to human displacements among those separated zones.
We focus on the situation where the set of vertices of the graph $\mathscr{G}$ is split into at least two subsets:
the first subset being coupled with instances of the HIV model for which the basic reproduction number satisfies $R_0 < 1$,
thus admitting a unique equilibrium which is a DFE;
and the second subset being coupled with instances of the HIV model for which $R_0 > 1$,
thus presenting the coexistence of a DFE and an EE.
We address the following questions.
\emph{Does the coupling between the two subsets of vertices create new equilibrium points?}
\emph{Is it possible to eliminate the endemic equilibriums with a suitable disposition of the couplings?}
\emph{If not, is it possible to minimize the propagation of the epidemic in the network by
	searching an optimal topology?}

Those questions are of great interest, and still have not been deeply studied. 
Recently, it has been proved, in the particular
case of a behavioral model \cite{cantin2018nonidentical}, that oriented chains play an important role
for reaching an expected equilibrium.
Here, the equilibrium we aim to favor in the network is the DFE. 
The main goal of this paper is to find a topology for which the DFEs of a given subset of the network drive the whole network to a global DFE.

This paper is organized as follows.
In Section~\ref{sec:statment}, we recall the equations and stability results of the HIV/AIDS model proposed
in \cite{silva2017EcoComplexity} and analyzed in \cite{silva2017global} that is considered in our study,
and we improve the previous model by constructing a novel complex network model.
In Section~\ref{sec:positregion}, we show that the complex network model is well-posed, by proving
the existence of a positively invariant region that guarantees the non-negativity of the solutions,
together with their boundedness and global existence.
In Section~\ref{sec:stabnetwork}, we explore the influence of the coupling on the equilibrium points of the network,
by computing the global basic reproduction number for particular small networks,
and prove under reasonable assumptions that the network admits a unique DFE which is globally asymptotically stable.
A case study is considered in Section~\ref{sec:casestudyCV}, where the HIV/AIDS epidemic in the Cape Verde archipelago is analyzed and relevant numerical simulations are presented. Moreover, the existence of an optimal topology minimizing the level of infection in the network is investigated.
We end our paper with Section~\ref{sec:concl} with conclusions and discussion of possible future works. 

% ----------------------------------------------

% --------------------------------------------

\section{Problem statement}
\label{sec:statment}

In this section, we recall the HIV/AIDS compartmental model, given by a system of differential equations firstly proposed
in \cite{silva2017EcoComplexity}, which describes the transmission dynamics of HIV in a homogeneous population with variable size.
In Subsection~\ref{section:Construction-of-the-complex-network} we explicit the construction of a complex network determined
with nonidentical instances of the previous HIV/AIDS model.

\subsection{HIV/AIDS model}

We consider a human population affected by a HIV/AIDS epidemics, from \cite{silva2017EcoComplexity}.
The total population is divided into four mutually-exclusive compartments:
susceptible individuals ($S$); 
HIV-infected individuals with no clinical symptoms of AIDS
(the virus is living or developing in the individuals but without producing symptoms or only mild ones) 
but able to transmit HIV to other individuals ($I$); 
HIV-infected individuals under ART treatment
(the so called chronic stage) with a viral load remaining low ($C$); 
and HIV-infected individuals with AIDS clinical symptoms ($A$).
The total population at time $t$, denoted by $N(t)$, is given by
\[
N(t) = S(t) + I(t) + C(t) + A(t).
\]
The SICA model is given by a system of four ordinary differential equations that can be written as follows (see \cite{silva2017global}):
\begin{equation}
\begin{cases}
\dot{S}(t) = \Lambda - \beta \left[ I(t) + \eta_C  C(t) + \eta_A  A(t) \right] S(t) - \mu S(t),\\[0.2 cm]
\dot{I}(t) = \beta \left[ I(t) + \eta_C C(t)  
+ \eta_A  A(t) \right] S(t) - \left(\rho + \phi + \mu\right) I(t) 
+ \omega C(t) + \alpha A(t), \\[0.2 cm]
\dot{C}(t) = \phi I(t) - (\omega + \mu)C(t),\\[0.2 cm]
\dot{A}(t) =  \rho \, I(t) - (\alpha + \mu + d) A(t).
\end{cases}
\label{eq:SICA-model}
\end{equation}

The parameters $\Lambda$, $\mu$, $\beta$, $\eta_C$, $\eta_A$, $\phi$, $\rho$, $\alpha$, $\omega$ and $d$ take fixed values and their significance is detailed in Table~\ref{table:parameters}. 
% ------------------------
\begin{table}[!htb]
	\centering
	\caption{Parameters of the HIV/AIDS model \eqref{eq:SICA-model}.}
	\label{table:parameters}
	\begin{tabular}{l l}
		\hline \hline
		{\small{Symbol}} &  {\small{Description}}\\
		\hline
		{\small{$\Lambda$}} & {\small{Recruitment rate}} \\
		{\small{$\mu$}} & {\small{Natural death rate}} \\
		{\small{$\beta$}} & {\small{HIV transmission rate}} \\
		{\small{$\eta_C$}} & {\small{Modification parameter}} \\
		{\small{$\eta_A$}} & {\small{Modification parameter}}\\	
		{\small{$\phi$}} & {\small{HIV treatment rate for $I$ individuals}} \\
		{\small{$\rho$}} & {\small{Default treatment rate for $I$ individuals}}\\
		{\small{$\alpha$}} & {\small{AIDS treatment rate}}\\
		{\small{$\omega$}} & {\small{Default treatment rate for $C$ individuals}}\\
		{\small{$d$}} & {\small{AIDS induced death rate}}\\
		\hline \hline
	\end{tabular}
\end{table}
% ----------------------------------

We recall that system \eqref{eq:SICA-model} admits a disease-free equilibrium (DFE) given by
\begin{equation}
\Sigma_0 = \left( S^0, I^0, C^0, A^0  \right) 
= \left(\frac{\Lambda}{\mu},0, 0,0  \right).
\label{eq:DFE}
\end{equation}
We introduce the basic reproduction number $R_{0}$,
which represents the expected average number of new HIV infections produced by a single HIV-infected 
individual when in contact with a completely susceptible population, given by
\begin{equation}
R_0 = \frac{ S^0 \beta\, \left[ \xi_2  \left( \xi_1 +\rho \eta_A \right) + \eta_C \phi  \xi_1 \right] }
{\mu \left[ \xi_2  \left( \rho + \xi_1\right) +\phi \xi_1 +\rho d \right] +\rho \omega d} 
= \frac{S^0 \mathcal{N}}{\mathcal{D}},
\label{eq:BRN-R0}
\end{equation}
where 
\[
\begin{split}
&\xi_1 = \alpha + \mu + d, \quad 
\xi_2 = \omega + \mu,\\
&\mathcal{N} = \beta \left[  \xi_2  \left( \xi_1 +\rho\, \eta_A \right) + \eta_C \,\phi \, \xi_1 \right],\\
&\mathcal{D} = \mu \left[  \xi_2  \left( \rho + \xi_1\right) +\phi\, \xi_1 +\rho\,d \right] +\rho \omega d.
\end{split}
\]

\begin{theo}{\cite{silva2017global}}
	\label{theo:global-stab-dfe}
	The disease free equilibrium $\Sigma_0$ given by \eqref{eq:DFE} is globally 
	asymptotically stable for $R_0 < 1$. 
\end{theo}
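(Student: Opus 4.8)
The plan is to prove that $\Sigma_0$ is globally asymptotically stable on the biologically relevant state space by combining a positively invariant absorbing region with a Lyapunov function built from the next-generation matrices, and then closing the argument with LaSalle's invariance principle. First I would fix the region $\Omega = \{(S,I,C,A)\in\R_+^4 : S+I+C+A \le \Lambda/\mu\}$. Summing the four equations of \eqref{eq:SICA-model} gives $\dot N = \Lambda - \mu N - dA \le \Lambda - \mu N$ for $N=S+I+C+A$, so $\Omega$ is forward invariant and globally attracting; in particular every trajectory eventually satisfies $S(t)\le N(t)\le S^0=\Lambda/\mu$. This upper bound on $S$ is exactly what lets me dominate the infected compartments by their linearization.

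Next I would isolate the infected subsystem $x=(I,C,A)^\top$ and write its right-hand side in the van den Driessche--Watmough form, with new-infection matrix and transition matrix evaluated at $\Sigma_0$ given by
\[
F = \beta S^0 \begin{pmatrix} 1 & \eta_C & \eta_A \\ 0 & 0 & 0 \\ 0 & 0 & 0 \end{pmatrix},
\qquad
V = \begin{pmatrix} \rho+\phi+\mu & -\omega & -\alpha \\ -\phi & \omega+\mu & 0 \\ -\rho & 0 & \alpha+\mu+d \end{pmatrix}.
\]
Using only $S\le S^0$ and $S,I,C,A\ge 0$, the infection term obeys $\beta[I+\eta_C C+\eta_A A]S \le (Fx)_1$, so inside $\Omega$ the subsystem satisfies the componentwise differential inequality $\dot x \le (F-V)x$. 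The column sums of $V$ equal $\mu,\mu,\mu+d>0$ while its off-diagonal entries are nonpositive, so $V$ is a nonsingular M-matrix and $V^{-1}\ge 0$; moreover $R_0$ in \eqref{eq:BRN-R0} equals the spectral radius of $FV^{-1}$, which I would confirm by a direct determinant computation.

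The Lyapunov candidate I would use is $\mathcal L = w^\top V^{-1}x$, where $w\ge 0$ is a left Perron eigenvector of $V^{-1}F$ for its spectral radius $R_0$. Since $F$ has rank one, $F=\beta S^0 e_1 g^\top$ with $g=(1,\eta_C,\eta_A)^\top$, and one checks directly that $w=g$ works and is strictly positive whenever $\eta_C,\eta_A>0$; together with $V^{-1}\ge 0$ and $x\ge 0$ this makes $\mathcal L\ge 0$. Differentiating and using the inequality above,
\[
\dot{\mathcal L} = w^\top V^{-1}\dot x \le w^\top V^{-1}(F-V)x = (R_0-1)\,w^\top x \le 0 \quad \text{whenever } R_0<1 .
\]
I would then invoke LaSalle's principle on the compact invariant set $\Omega$: when $R_0<1$, $\dot{\mathcal L}=0$ forces $w^\top x=0$, and since $w>0$ this yields $I=C=A=0$; substituting into the $S$-equation leaves $\dot S=\Lambda-\mu S$, whose only invariant point is $S=S^0$. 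Hence the largest invariant subset of $\{\dot{\mathcal L}=0\}$ is $\{\Sigma_0\}$, and global asymptotic stability follows.

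I expect the main obstacle to be twofold. First, the componentwise inequality $\dot x\le(F-V)x$ rests on $S\le S^0$, which for arbitrary initial data holds only once the trajectory has entered $\Omega$; one must therefore first establish attractivity of $\Omega$ and argue on the limiting dynamics (or simply restrict attention to the absorbing set $\Omega$). Second, the whole scheme depends on the M-matrix and Perron--Frobenius structure of $V$ and $V^{-1}F$, which is what simultaneously guarantees $\mathcal L\ge 0$ and the strict positivity of $w$ needed to conclude $x=0$ in the LaSalle step; verifying these structural facts carefully is the technical heart of the argument, whereas the explicit identification of $R_0$ with the spectral radius of $FV^{-1}$ is routine if lengthy.
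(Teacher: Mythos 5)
Your proposal is correct and follows essentially the same route as the proof of this theorem in \cite{silva2017global} (the paper itself only quotes the result, but reuses that proof's Lyapunov function in Theorem~\ref{theo:global-dfe-network}): a linear Lyapunov function in the infected compartments, the invariant-region bound $S \leq \Lambda/\mu$, and LaSalle's invariance principle. In fact your function $\mathcal{L} = (1,\,\eta_C,\,\eta_A)\,V^{-1}(I,\,C,\,A)^{T}$ coincides, up to the positive factor $\mathcal{D} = \det V$, with the function $k_1 I + k_2 C + k_3 A$ whose coefficients are given by \eqref{eq:coeff-ki-xii}.
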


We also recall that model \eqref{eq:SICA-model} 
has a unique endemic equilibrium $\Sigma_+ =(S^*, I^*, C^*, A^*)$, which is globally asymptotically stable, whenever $R_0 > 1$. 

\begin{lemm}{\cite{silva2017global}}
	\label{lem:uni:ee}
	The model \eqref{eq:SICA-model} has a unique endemic equilibrium 
	$\Sigma_+ =(S^*, I^*, C^*, A^*)$ whenever $R_0 > 1$, 
	which is given by
	\begin{equation*}
	\label{eq:EE}
	S^* = \frac{ \mathcal{D}}{ \mathcal{N}}\, , \quad 
	I^* = \frac{\xi_1 \xi_2 (\Lambda \mathcal{N} 
		- \mu \mathcal{D})}{\mathcal{D} \mathcal{N}} \, , \quad
	C^* = \frac{\phi \xi_1 (\Lambda \mathcal{N} 
		-\mu \mathcal{D})}{\mathcal{D} \mathcal{N}} \, , \quad
	A^* = \frac{\rho \xi_2 (\Lambda \mathcal{N} 
		- \mu \mathcal{D})}{\mathcal{D} \mathcal{N}}.
	\end{equation*}	
\end{lemm}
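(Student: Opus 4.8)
The plan is to locate the equilibrium by setting the right-hand sides of \eqref{eq:SICA-model} to zero and solving the resulting algebraic system under the assumption $I^* \neq 0$ that characterises an endemic state. First I would read off the last two equations, which are linear and decoupled from the nonlinearity: $\phi I^* = \xi_2 C^*$ and $\rho I^* = \xi_1 A^*$ give immediately
\[
C^* = \frac{\phi}{\xi_2}\, I^*, \qquad A^* = \frac{\rho}{\xi_1}\, I^*,
\]
so that $C^*$ and $A^*$ are positive multiples of $I^*$. This reduces the problem to determining the single scalar $I^*$ (equivalently, the force of infection), after which all components follow.

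Next I would introduce the force of infection $\lambda^* = \beta\left[ I^* + \eta_C C^* + \eta_A A^* \right]$. Substituting the expressions for $C^*$ and $A^*$ and placing the bracket over the common denominator $\xi_1 \xi_2$, the numerator collapses exactly into the quantity $\mathcal{N}$, yielding the clean relation $\lambda^* = \mathcal{N} I^* / (\xi_1 \xi_2)$, hence $I^* = \xi_1 \xi_2 \lambda^* / \mathcal{N}$. The first equation of \eqref{eq:SICA-model} then reads $\Lambda = (\lambda^* + \mu) S^*$, i.e. $S^* = \Lambda/(\lambda^* + \mu)$. The decisive manoeuvre is to add the first two equations of \eqref{eq:SICA-model}: the nonlinear incidence term cancels identically, leaving the purely linear balance
\[
\Lambda - \mu S^* = (\rho + \phi + \mu)\, I^* - \omega C^* - \alpha A^*.
\]
Because no quadratic term in $\lambda^*$ survives, the reduced system is linear and admits at most one endemic solution; this linearity is precisely what rules out a backward bifurcation and secures uniqueness.

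To finish I would eliminate $S^*$ and $I^*$ from this balance in favour of $\lambda^*$, substituting $C^*,A^*$ on the right and $S^* = \Lambda/(\lambda^* + \mu)$ on the left, and then dividing through by the common factor $\lambda^* \neq 0$. This produces a single linear equation whose solution is $S^* = \mathcal{D}/\mathcal{N}$, from which $\lambda^* = (\Lambda \mathcal{N} - \mu \mathcal{D})/\mathcal{D}$ and the stated formulas for $I^*, C^*, A^*$ follow by back-substitution. The main obstacle is the algebraic identity hidden in the right-hand coefficient, namely
\[
(\rho + \phi + \mu)\,\xi_1 \xi_2 - \omega \phi \, \xi_1 - \alpha \rho\, \xi_2 = \mathcal{D},
\]
which I expect to establish by regrouping $\phi\xi_1(\xi_2 - \omega) = \mu \phi \xi_1$ and $\rho \xi_2(\xi_1 - \alpha) = \rho \xi_2(\mu + d)$, using $\xi_2 - \omega = \mu$ and $\xi_1 - \alpha = \mu + d$, after which the leftover $\rho d$ terms cancel against those in $\mathcal{D}$. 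Finally, since all parameters are positive we have $\mathcal{N}, \mathcal{D} > 0$, so the three nontrivial components are positive exactly when $\Lambda \mathcal{N} - \mu \mathcal{D} > 0$; recalling from \eqref{eq:BRN-R0} that $R_0 = \Lambda \mathcal{N}/(\mu \mathcal{D})$, this positivity is equivalent to $R_0 > 1$, which simultaneously confirms that the equilibrium is genuinely endemic and completes the characterisation.
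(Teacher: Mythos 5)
This lemma is quoted by the paper from \cite{silva2017global} without proof, so there is no internal argument to compare your attempt against; it must be judged on its own merits. Your derivation is correct and complete: the linear relations $C^*=\phi I^*/\xi_2$ and $A^*=\rho I^*/\xi_1$, the identification $\lambda^*=\mathcal{N}I^*/(\xi_1\xi_2)$ of the force of infection, the key identity $(\rho+\phi+\mu)\,\xi_1\xi_2-\omega\phi\,\xi_1-\alpha\rho\,\xi_2=\mathcal{D}$ (which I have checked; your regrouping via $\xi_2-\omega=\mu$ and $\xi_1-\alpha=\mu+d$ works, with $\rho d\,\xi_2$ splitting into $\mu\rho d+\rho\omega d$), and the final cancellation of $\lambda^*\neq 0$ all go through, yielding exactly the stated formulas, the uniqueness of the endemic equilibrium, and the equivalence of positivity with $R_0=\Lambda\mathcal{N}/(\mu\mathcal{D})>1$. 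This is essentially the standard equilibrium computation for SICA-type models, i.e.\ the same route taken in the cited source, so nothing further is needed.
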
 

\begin{theo}{\cite{silva2017global}}
	\label{theo:globstab-ee}
	The endemic equilibrium $\Sigma_+$ given by \eqref{eq:EE}
	is globally asymptotically stable for $R_0 > 1$.
\end{theo}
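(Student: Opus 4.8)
The plan is to establish global asymptotic stability by combining local stability with a Lyapunov–LaSalle argument on the interior of the feasible region. Since the statement concerns the interior equilibrium $\Sigma_+$, which exists and is unique by Lemma~\ref{lem:uni:ee} whenever $R_0 > 1$, I would first set up the state space carefully: work in the positively invariant compact region $\Omega = \{(S,I,C,A) : S,I,C,A \ge 0,\ S+I+C+A \le \Lambda/\mu\}$ and establish uniform persistence of the flow when $R_0 > 1$. Uniform persistence follows because the DFE $\Sigma_0$ is unstable in this regime (a standard consequence of the next-generation matrix computation that defines $R_0$), so every forward trajectory starting in the interior is eventually bounded away from the invariant boundary faces where $I = C = A = 0$. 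This guarantees that all four coordinates stay positive and that the logarithmic Lyapunov function below is well defined along trajectories.

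The core step is to construct a Goh–Volterra (logarithmic) Lyapunov function of the form
\[
V = a_S\, \psi\!\left(\frac{S}{S^*}\right)S^* + a_I\, \psi\!\left(\frac{I}{I^*}\right)I^* + a_C\, \psi\!\left(\frac{C}{C^*}\right)C^* + a_A\, \psi\!\left(\frac{A}{A^*}\right)A^*,
\]
where $\psi(x) = x - 1 - \ln x \ge 0$ with equality iff $x = 1$, and the weights $a_S, a_I, a_C, a_A > 0$ are to be determined. Differentiating along \eqref{eq:SICA-model} and eliminating the parameters $\Lambda$, $\rho + \phi + \mu$, $\omega + \mu$ and $\alpha + \mu + d$ through the four equilibrium identities satisfied by $\Sigma_+$, the derivative $\dot V$ should reduce to a strictly negative susceptible term of the type $-\mu S^*\left(2 - \frac{S}{S^*} - \frac{S^*}{S}\right)$ plus several grouped flux terms built from the infection and transition rates. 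I would then fix the weights so that these flux terms assemble into cyclic sums to which the arithmetic–geometric mean inequality applies, yielding $\dot V \le 0$ on the interior of $\Omega$, with equality exactly at $\Sigma_+$.

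The main obstacle will be the two back-flow terms $\omega C$ and $\alpha A$ feeding into the $\dot I$ equation: unlike a plain SIR cascade, the SICA model has transitions $I \to C$ and $I \to A$ together with the returns $C \to I$ and $A \to I$, so the flux digraph contains short cycles and a naive weighting does not immediately produce sign-definite AM–GM blocks. Handling this requires either a careful balancing of $a_C$ and $a_A$ against the treatment and default rates, or, equivalently, invoking the matrix–tree / graph-theoretic systematic construction of Li and Shuai to read off admissible weights from the weighted transition digraph. Once $\dot V \le 0$ is secured, I would finish by LaSalle's invariance principle: the largest invariant set contained in $\{\dot V = 0\}$ collapses to $\{\Sigma_+\}$ (equality forces $S = S^*$, and the equilibrium relations then propagate through the remaining equations to pin down $I, C, A$), so $\Sigma_+$ attracts every interior trajectory; together with local asymptotic stability, which follows from the signs of the real parts of the Jacobian eigenvalues at $\Sigma_+$, this gives global asymptotic stability throughout the interior of $\Omega$.

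As a fallback, if an explicit Lyapunov function proves intractable, I would turn to the geometric method of Li and Muldowney: establish uniform persistence and a compact absorbing set as above, and then rule out nonconstant periodic orbits via a Bendixson-type criterion using the second additive compound matrix of the Jacobian. Combined with the uniqueness of $\Sigma_+$, this would likewise yield global stability.
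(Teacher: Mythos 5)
First, a point of fact: this paper never proves Theorem~\ref{theo:globstab-ee}. The theorem is recalled, with the citation \cite{silva2017global}, as a known property of the single-node model \eqref{eq:SICA-model}, and no proof of it appears anywhere in the present text --- it enters only as an imported ingredient (existence and uniqueness of $\Sigma_+$ coming from Lemma~\ref{lem:uni:ee}). So the only proof your attempt can be compared with is the one in \cite{silva2017global}, which establishes the result by constructing a suitable Lyapunov function for the endemic equilibrium and concluding with LaSalle's invariance principle. Your primary route (Volterra--Goh function plus LaSalle) is therefore the right one, and your Li--Muldowney geometric fallback is not what is used.

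However, as a proof your proposal has a genuine gap, and it sits exactly where you locate ``the main obstacle.'' The entire mathematical content of the theorem is the determination of the weights $a_S,a_I,a_C,a_A$ and the verification that, after substituting the four equilibrium identities, $\dot V$ really does decompose into nonpositive blocks. For this model that is the delicate part: the $\dot I$ equation receives three infection routes ($\beta IS$, $\beta\eta_C CS$, $\beta\eta_A AS$) and two return flows ($\omega C$, $\alpha A$), so the transition digraph contains the cycles $I\leftrightarrow C$ and $I\leftrightarrow A$, and one must check that weights cancelling the infection routes are simultaneously compatible with those cancelling the cycles. Saying that this ``requires either a careful balancing of $a_C$ and $a_A$ \dots{} or invoking the matrix--tree construction of Li and Shuai'' defers precisely this step; moreover, the Li--Shuai machinery is formulated for couplings between the vertices of a network, each vertex carrying its own Lyapunov function, and transplanting it to intra-compartmental amelioration cycles is itself work that would have to be written out. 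A small sign slip confirms the computation was not actually carried out: the ``strictly negative susceptible term'' you display, $-\mu S^*\bigl(2 - \frac{S}{S^*} - \frac{S^*}{S}\bigr)$, is in fact nonnegative; the correct nonpositive term is $\mu S^*\bigl(2 - \frac{S}{S^*} - \frac{S^*}{S}\bigr)$. Two lesser remarks: the separate local-stability check via the Jacobian is unnecessary (positive definiteness of $V$ together with $\dot V \le 0$ already yields Lyapunov stability) and would itself be a nontrivial Routh--Hurwitz computation in dimension four; and uniform persistence does not follow from instability of $\Sigma_0$ alone --- though you do not actually need it, since $V\to+\infty$ as any of $S,I,C,A\to 0^+$, so once $\dot V\le 0$ is established, interior trajectories are automatically confined to compact subsets of the open orthant.
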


The HIV/AIDS model \eqref{eq:SICA-model}  can be rewritten in the following way
\begin{equation}
\dot{x} = f(x,\,p), \quad t \geq 0, \quad x \in \R^4, \quad p \in \R^{10},
\label{SICA-system-short-form}
\end{equation}
where $x = (S,\,I,\,C,\,A)^T$,
$p = (\Lambda,\,\beta,\,\eta_C,\,\eta_A,\,\mu,\,\rho,\,\phi,\,\omega,\,\alpha,\,d)$
and $f(x,\,p)$ is defined by:
\begin{equation*}
f(x,\,p) = \left[
\begin{array}{c}
\Lambda - \beta \left[ I(t) + \eta_C C(t) + \eta_A  A(t) \right] S(t) - \mu S(t)\\[1em]
\beta \left[ I(t) + \eta_C C(t)  
+ \eta_A  A(t) \right] S(t) - \left(\rho + \phi + \mu\right) I(t) 
+ \omega C(t) + \alpha A(t) \\[1em]
\phi I(t) - (\omega + \mu)C(t)\\[1em]
\rho \, I(t) - (\alpha + \mu + d) A(t)
\end{array}
\right].
\end{equation*}

Nonidentical instances of system \eqref{SICA-system-short-form} can be coupled with the vertices of a graph
in order to give rise to a complex network, as we are going to see in the coming subsection.

% ----------------------------------

\subsection{Construction of the complex network}
\label{section:Construction-of-the-complex-network}

Let us consider a graph $\mathscr{G} = (\mathscr{V},\,\mathscr{E})$ made of a finite set $\mathscr{V}$ of $n$ vertices,
where $n$ denotes an integer greater than $2$,
and a finite set $\mathscr{E}$ of $m$ edges, where $m$ denotes a positive integer.
This graph models the geographical zone which is affected by the epidemics.
We assume that $\mathscr{V}$ can be split into at least two subsets of vertices $\mathscr{V}_1$
and $\mathscr{V}_2$.
We couple the vertices of $\mathscr{V}_1$ with an instance of system \eqref{SICA-system-short-form} for which $R_0 < 1$,
and  the vertices of $\mathscr{V}_2$ with an instance of system \eqref{SICA-system-short-form} for which $R_0 > 1$.
The complex network is determined by the following non-linear and autonomous differential system:
\begin{equation}
\dot{X} = F(X,\,P) + L H X,
\label{eq:SICA-network}
\end{equation}
where
\[
\begin{split}
&X = (x_1,\,\dots,\,x_n)^T \in \left(\R^{4}\right)^n,\\
&H X = (H x_1,\,\dots,\,H x_n)^T \in \left(\R^{4}\right)^n,\\
&P = (p_1,\,\dots,\,p_n) \in \left(\R^{10}\right)^n,
\end{split}
\]
and $F$ determines the internal dynamic of each vertex:
\begin{equation*}
F(X,\,P) = \big(f(x_1,\,p_1),\,\dots,\,f(x_n,\,p_n)\big)^T.
\end{equation*}
Furthermore, $L$ is the matrix of connectivity, which is defined as follows.
For each edge $(k,\,j) \in \mathscr{E}$, $k \neq j$,
we have $L_{j,k} > 0$.
If $(k,\,j) \notin \mathscr{E}$, $k \neq j$, we set $L_{j,k} = 0$.
The diagonal coefficients satisfy
\begin{equation*}
L_{j,j} = - \sum_{\substack{k = 1\\ k \neq j}}^n L_{k,j}.
\end{equation*}
Finally, $H$ is the matrix of the coupling strengths and it is given by
\begin{equation*}
H = 
\left[
\begin{array}{l l l l}
\varepsilon_S	&	0				&	0				&	0					\\
0				&	\varepsilon_I	&	0				&	0					\\
0				&	0				&	\varepsilon_C	&	0					\\
0				&	0				&	0				&	\varepsilon_A
\end{array}
\right],
\end{equation*}
with non negative coefficients $\varepsilon_S$, $\varepsilon_I$, $\varepsilon_C$ and $\varepsilon_A$.

In this complex network model, we consider that an edge $(k,\,j) \in \mathscr{E}$, $k \neq j$,
models a connection between two vertices $k$ and $j$,
which corresponds to human displacements from vertex $k$ towards vertex $j$.
Moreover, the parameter $\varepsilon_S$ models the rate of susceptible individuals on vertex $k$ which migrate
towards vertex $j$. The parameters $\varepsilon_I$, $\varepsilon_C$ and $\varepsilon_A$ are defined analogously
for the compartments $I$, $C$ and $A$ respectively.
This implies that our model can take into account the situation where a part of the population is not concerned with the migrations.
Additionally, each connection $(k,\,j)$ is weighted by a positive coefficient $L_{j,k}$, which means that
the displacements can be different in some places of the network.
It is worth emphasizing that the migrations are assumed to be instantaneous,
and that individuals are not subject to an evolution from one compartment to another
during migration from one node to another.

Next we explicit the equations which describe the state of vertex $j \in \lbrace 1,\,\dots,\,n \rbrace$:
\begin{equation}
\begin{cases}
\dot{S}_j = \Lambda_j - \beta_j \big( I_j + \eta_{C,j} \, C_j + \eta_{A,j}  A_j \big) S_j
- \mu_j S_j
+ \varepsilon_S \displaystyle\sum_{k=1}^n L_{j,k} S_k,\\[0.2 cm]
\dot{I}_j = \beta_j \big( I_j + \eta_{C,j} \, C_j + \eta_{A,j}  A_j \big) S_j
- \left(\rho_j + \phi_j + \mu_j\right) I_j 
+ \omega_j C_j + \alpha_j A_j
+ \varepsilon_I \displaystyle\sum_{k=1}^n L_{j,k} I_k, \\[0.2 cm]
\dot{C}_j = \phi_j I_j - (\omega_j + \mu_j)C_j
+ \varepsilon_C \displaystyle\sum_{k=1}^n L_{j,k} C_k,\\[0.2 cm]
\dot{A}_j = \rho_j \, I_j - (\alpha_j + \mu_j + d_j) A_j
+ \varepsilon_A \displaystyle\sum_{k=1}^n L_{j,k} A_k,
\end{cases}
\label{eq:vertex-j-state-equations}
\end{equation}
where the time dependence is omitted, in order to lighten the notations.
The coupling terms can be divided into fluxes exiting from vertex $j$
and fluxes entering in vertex $j$,
that is
\[
\begin{split}
&\displaystyle\sum_{k=1}^n L_{j,k} S_k =
- \left(\sum_{\substack{k = 1\\ k \neq j}}^n L_{k,j} \right) S_j
+ \sum_{\substack{k = 1\\ k \neq j}}^n L_{k,j}  S_k,\quad
\displaystyle\sum_{k=1}^n L_{j,k} I_k =
- \left(\sum_{\substack{k = 1\\ k \neq j}}^n L_{k,j} \right) I_j
+ \sum_{\substack{k = 1\\ k \neq j}}^n L_{k,j}  I_k,\\
&\displaystyle\sum_{k=1}^n L_{j,k} C_k =
- \left(\sum_{\substack{k = 1\\ k \neq j}}^n L_{k,j} \right) C_j
+ \sum_{\substack{k = 1\\ k \neq j}}^n L_{k,j}  C_k,\quad
\displaystyle\sum_{k=1}^n L_{j,k} A_k =
- \left(\sum_{\substack{k = 1\\ k \neq j}}^n L_{k,j} \right) A_j
+ \sum_{\substack{k = 1\\ k \neq j}}^n L_{k,j}  A_k,
\end{split}
\]
for all $j \in \lbrace 1,\,\dots,\,n \rbrace$.

% ------------------------------

\section{Positively invariant region}
\label{sec:positregion}

In this section, we prove that the complex network problem \eqref{eq:SICA-network} is well posed,
and admits a positively invariant region.
This property is obtained as a consequence of the conservation of the couplings terms in system \eqref{eq:SICA-network},
which corresponds to the fact that the matrix of connectivity $L$ is a zero column sum matrix.

\subsection{Preliminary results}

We begin recalling two preliminary results. Since their proofs are well-known, we omit it.

\begin{lemm}
	Let us consider the Cauchy problem
	\begin{equation}
	\begin{cases}
	\dot{\zeta}(t) = f(t) \zeta(t) + g(t), \quad t > t_0,\\
	\zeta(t_0) = \zeta_0,
	\end{cases}
	\label{eq:Cauchy-problem}
	\end{equation}
	where $f$ and $g$ are two continuous functions defined on $\R$, and $t_0 \in \R$.
	We assume that $g(t) \geq 0$ for all $t \in \R$, and that $\zeta(t_0) \geq 0$.
	Let $\zeta(t)$ be a solution of \eqref{eq:Cauchy-problem} defined on $[t_0,\,t_0+\tau]$
	with $\tau > 0$, such that $\zeta(t_0) \geq 0$.
	
	Then we have $\zeta(t) \geq 0$, for all $t \in [t_0,\,t_0+\tau]$.
	\label{lemma:differential-equation}
\end{lemm}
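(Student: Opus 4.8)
The plan is to reduce the claim to the explicit form of the solution obtained by the variation-of-constants (integrating factor) method, where non-negativity becomes manifest. Since $f$ is continuous on $\R$, I first set $F(t) = \int_{t_0}^t f(s)\,\dd s$, which is well defined and continuously differentiable on $[t_0,\,t_0+\tau]$ with $F'(t) = f(t)$.

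The key step is to introduce the integrating factor $\e^{-F(t)}$ and to compute
\[
\frac{\dd}{\dd t}\left[\e^{-F(t)}\zeta(t)\right] = \e^{-F(t)}\left[\dot{\zeta}(t) - f(t)\zeta(t)\right] = \e^{-F(t)} g(t).
\]
By hypothesis $g(t) \geq 0$, and $\e^{-F(t)} > 0$, so the right-hand side is non-negative on $[t_0,\,t_0+\tau]$; hence the map $t \mapsto \e^{-F(t)}\zeta(t)$ is non-decreasing there.

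Then I would evaluate at $t_0$: since $F(t_0) = 0$, we have $\e^{-F(t_0)}\zeta(t_0) = \zeta_0 \geq 0$ by assumption. Monotonicity gives $\e^{-F(t)}\zeta(t) \geq \zeta_0 \geq 0$ for every $t \in [t_0,\,t_0+\tau]$, and multiplying through by the strictly positive factor $\e^{F(t)}$ yields $\zeta(t) \geq 0$, as claimed. Equivalently, integrating the identity above produces the closed form
\[
\zeta(t) = \e^{F(t)}\zeta_0 + \int_{t_0}^t \e^{F(t)-F(s)} g(s)\,\dd s,
\]
in which both summands are products of non-negative quantities, so that non-negativity of $\zeta$ is immediate.

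There is no substantial obstacle here: the statement is the classical positivity property of scalar linear ODEs, and the only points deserving a word are the continuity of $f$ (which guarantees that $F$ and the integrating factor are well defined and differentiable) and the sign conditions $g \geq 0$ and $\zeta_0 \geq 0$, both of which are already in the hypotheses. This is precisely the reason the authors may omit the proof as well known.
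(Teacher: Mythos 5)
Your proof is correct: the integrating-factor computation, the monotonicity of $t \mapsto \e^{-F(t)}\zeta(t)$, and the resulting closed-form expression are all sound, and this is precisely the classical argument the paper has in mind — indeed the paper omits the proof entirely, stating that it is well known. Since there is no written proof in the paper to diverge from, your standard variation-of-constants argument is exactly the intended one.
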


\begin{lemm}
	Let $\zeta$ be a continuous function defined on $[0,\,T]$, with $T > 0$,
	continuously differentiable on $]0,\,T]$,
	and satisfying the differential inequality
	\[
	\dot{\zeta}(t) + \delta_1 \zeta(t) \leq \delta_2, \quad 0 < t \leq T, 
	\]
	with two positive coefficients $\delta_1$, $\delta_2$.
	
	Then we have
	\[
	\zeta(t) \leq \left[\zeta(0) - \frac{\delta_2}{\delta_1}\right] e^{-\delta_1 t} + \frac{\delta_2}{\delta_1},
	\quad 0 \leq t \leq T.
	\]
	\label{lemma:differential-inequality}
\end{lemm}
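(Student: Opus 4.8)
The plan is to reduce this first-order linear differential inequality to an exact comparison estimate by the standard integrating-factor method. First I would introduce the auxiliary function $\psi(t) = e^{\delta_1 t}\zeta(t)$ on $[0,\,T]$. Since $\zeta$ is continuous on $[0,\,T]$ and continuously differentiable on $]0,\,T]$, the same holds for $\psi$, and on the open interval we may compute
\[
\dot{\psi}(t) = e^{\delta_1 t}\big(\dot{\zeta}(t) + \delta_1 \zeta(t)\big) \leq \delta_2\, e^{\delta_1 t}, \quad 0 < t \leq T,
\]
where the inequality is exactly the hypothesis multiplied by the strictly positive factor $e^{\delta_1 t}$.

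Next I would integrate this differential inequality for $\psi$. Because differentiability is only guaranteed on $]0,\,T]$, I would first integrate over $[\eps,\,t]$ with $0 < \eps < t$, obtaining
\[
\psi(t) - \psi(\eps) \leq \delta_2 \int_{\eps}^{t} e^{\delta_1 s}\,\dd s = \frac{\delta_2}{\delta_1}\big(e^{\delta_1 t} - e^{\delta_1 \eps}\big),
\]
and then let $\eps \to 0^{+}$, using the continuity of $\zeta$ (hence of $\psi$) at $0$ to replace $\psi(\eps)$ by $\psi(0) = \zeta(0)$. This yields
\[
e^{\delta_1 t}\zeta(t) \leq \zeta(0) + \frac{\delta_2}{\delta_1}\big(e^{\delta_1 t} - 1\big), \quad 0 \leq t \leq T.
\]

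Finally, multiplying through by $e^{-\delta_1 t}$ and regrouping the terms gives
\[
\zeta(t) \leq \zeta(0)\,e^{-\delta_1 t} + \frac{\delta_2}{\delta_1}\big(1 - e^{-\delta_1 t}\big) = \Big[\zeta(0) - \frac{\delta_2}{\delta_1}\Big]e^{-\delta_1 t} + \frac{\delta_2}{\delta_1},
\]
which is the claimed bound, valid on all of $[0,\,T]$ by continuity at the endpoint $t=0$.

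The only genuinely delicate point is the mismatch between the intervals of continuity and of differentiability: the hypothesis provides $\dot{\zeta}$ only on $]0,\,T]$, so one cannot integrate naively over $[0,\,t]$. I expect this to be the main (and essentially sole) obstacle, and it is handled cleanly by the $\eps \to 0^{+}$ limiting argument above together with the continuity of $\zeta$ at $0$; everything else is a routine application of the integrating factor $e^{\delta_1 t}$. An alternative route, which I would keep in reserve, is to define $\varphi(t) = \zeta(t) - \big[\zeta(0) - \tfrac{\delta_2}{\delta_1}\big]e^{-\delta_1 t} - \tfrac{\delta_2}{\delta_1}$, note that $\varphi(0) = 0$ and $\dot{\varphi}(t) + \delta_1 \varphi(t) \leq 0$ on $]0,\,T]$, and conclude $\varphi(t) \leq 0$ by the same integrating-factor monotonicity; this is merely a repackaging of the argument above.
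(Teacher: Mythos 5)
Your proof is correct. The paper itself omits the proof of this lemma, stating only that it is ``well-known''; your integrating-factor argument, with the auxiliary function $\psi(t) = e^{\delta_1 t}\zeta(t)$, is precisely the standard proof being alluded to, and your careful treatment of the endpoint — integrating over $[\eps,\,t]$ and letting $\eps \to 0^{+}$ using continuity of $\zeta$ at $0$, since differentiability is only assumed on $]0,\,T]$ — closes the one technical point that a naive integration over $[0,\,t]$ would gloss over.
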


% -----------------

\subsection{Non-negativity of the solutions of the complex network problem}

The next theorem guarantees the non-negativity of the solutions of the complex network problem \eqref{eq:SICA-network},
which is an obvious property to be satisfied for population dynamics models.
Since the proof uses classical techniques \cite{hethcote2000mathematics, lakshmikantham1989stability},
we only give the main steps.

\begin{theo}
	For any initial condition $X_0 \in (\R^+)^{4n}$, the Cauchy problem
	\begin{equation}
	\begin{cases}
	X(t) = F(X,\,P) + LHX, \quad t > 0,\\
	X(0) = X_0,
	\end{cases}
	\label{eq:Cauchy-problem-complex-network}
	\end{equation}
	where $F$, $P$, $L$ and $H$ are defined as above (see Section~\ref{section:Construction-of-the-complex-network}),
	admits a unique solution defined on $[0,\,T]$ with $T > 0$, whose components are non-negative on $[0,\,T]$.
	\label{theo:non-negativity-solutions}
\end{theo}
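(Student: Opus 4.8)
The plan is to establish local existence and uniqueness via the standard Cauchy–Lipschitz (Picard–Lindelöf) theorem, then bootstrap the non-negativity of solutions by an invariance argument on each coordinate. The vector field $X \mapsto F(X,P) + LHX$ is polynomial in $X$, hence $C^\infty$ and locally Lipschitz on $(\R^4)^n$; therefore for any $X_0 \in (\R^+)^{4n}$ there exists $T > 0$ and a unique solution $X$ of \eqref{eq:Cauchy-problem-complex-network} defined on $[0,\,T]$. This disposes of the existence-and-uniqueness part immediately, and the only real content of the theorem is the non-negativity of the components, which is where I would spend the effort.

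For non-negativity I would argue coordinate by coordinate, exploiting the fact that the coupling matrix $L$ is a zero-column-sum matrix with non-negative off-diagonal entries (so $L_{j,k} \geq 0$ for $k \neq j$ and $L_{j,j} = -\sum_{k \neq j} L_{k,j}$). The key structural observation is that each equation in \eqref{eq:vertex-j-state-equations} can be written in the scalar form $\dot{\zeta}(t) = f(t)\zeta(t) + g(t)$ once we isolate the diagonal term: for instance, the $S_j$-equation reads
\[
\dot{S}_j = -\Big[\beta_j\big(I_j + \eta_{C,j}\,C_j + \eta_{A,j}\,A_j\big) + \mu_j + \varepsilon_S\sum_{k\neq j}L_{k,j}\Big] S_j
+ \Big[\Lambda_j + \varepsilon_S\sum_{k\neq j}L_{k,j}\,S_k\Big],
\]
where the bracketed coefficient multiplying $S_j$ plays the role of $f(t)$ and the remaining terms play the role of $g(t)$. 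The analogous decompositions hold for $I_j$, $C_j$ and $A_j$, using the flux splitting already displayed at the end of Section~\ref{section:Construction-of-the-complex-network}.

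The crucial point — and I expect this to be the \emph{main obstacle} — is that the source term $g(t)$ in each equation is non-negative \emph{only as long as all the coupled coordinates remain non-negative}, since $g$ contains contributions like $\varepsilon_S \sum_{k\neq j} L_{k,j} S_k$ that depend on the other vertices. One cannot simply apply Lemma~\ref{lemma:differential-equation} to a single equation in isolation, because the different coordinates feed into one another through the entering-flux terms. To close this circularity I would use a barrier/continuity argument: suppose, for contradiction, that non-negativity fails, and let $t^* \in (0,T]$ be the first time at which some component vanishes and is about to become negative. Up to $t^*$ all components are non-negative by minimality, so every source term $g$ is non-negative on $[0,t^*]$; applying Lemma~\ref{lemma:differential-equation} to the offending coordinate on $[0,t^*]$ then yields that this component stays non-negative, contradicting the assumption that it crosses zero. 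A small technical care is needed to handle the strict-versus-nonstrict inequality at $t^*$ (one typically perturbs the source by $+\eps$ and passes to the limit, or works with the infimum over coordinates), but this is routine once the flux-splitting structure and the sign of the off-diagonal entries of $L$ are in hand. Boundedness and global existence on all of $[0,\infty)$ would then follow by summing the equations to control the total population $N = \sum_j (S_j + I_j + C_j + A_j)$ and invoking Lemma~\ref{lemma:differential-inequality}, though the statement as given only asserts existence on a finite interval $[0,T]$.
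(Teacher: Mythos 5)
Your proposal is correct, but it takes a genuinely different route from the paper's own proof. The paper does not argue by contradiction at a first crossing time: it introduces an \emph{auxiliary} Cauchy problem, obtained by rewriting each scalar equation with the exiting-flux coefficient $\gamma_j = \sum_{k \neq j} L_{k,j}$ isolated in the linear part and the remaining source terms arranged so that they are unconditionally non-negative; Lemma~\ref{lemma:differential-equation} then applies coordinate by coordinate to show that the auxiliary solution $\tilde{X}(t,\,X_0)$ has non-negative components, hence that $\tilde{X}$ is in fact a solution of the \emph{original} problem (the two vector fields coincide on the non-negative orthant), and uniqueness for the original Cauchy problem forces $\tilde{X} = X$. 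This sidesteps exactly the circularity you identified: it is the modification of the system, not a minimality argument, that breaks it. Two remarks on your version. First, as literally written your contradiction is not yet one: applying Lemma~\ref{lemma:differential-equation} on $[0,\,t^*]$ only re-proves non-negativity up to $t^*$, which is perfectly compatible with the component becoming negative just after $t^*$; the contradiction must come from the sign of the derivative at $t^*$, and that is precisely what your $+\eps$ perturbation supplies (a strictly positive source gives $\dot{\zeta}(t^*) \geq \eps > 0$, while a first touch of zero from above forces $\dot{\zeta}(t^*) \leq 0$), followed by a passage to the limit $\eps \to 0$ via continuous dependence. So the perturbation you mention is not optional polish; it is the step that makes the argument valid. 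Second, both proofs ultimately rest on the same structural facts, namely quasi-positivity of the vector field (non-negativity of the off-diagonal entries of $L$ and of the intrinsic sources), so the trade-off is one of packaging: the paper's auxiliary-problem device exploits uniqueness and avoids any delicate analysis at the boundary of the orthant, while your approach works directly on the original system at the cost of the $\eps$-regularization and limit.
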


\begin{proof}
	Let us consider an initial condition $X_0 \in (\R^+)^{4n}$.
	We denote by $X(t,\,X_0)$ the solution of the Cauchy problem \eqref{eq:Cauchy-problem-complex-network},
	defined on $[0,\,T]$ with $T > 0$.
	
%	We introduce an auxiliary problem defined by
%	\begin{equation}
%	\begin{cases}
%	\dot{S}_j =& \Lambda_j - \beta_j \big( I_j + \eta_{C,j} C_j + \eta_{A,j}  A_j \big) S_j
%	- \mu_j S_j
%	- \varepsilon_S \gamma_j S_j
%	+ \varepsilon_S \displaystyle\sum_{\substack{k = 1\\ k \neq j}}^n L_{k,j}  \abs{S_k},\\[0.2 cm]
%	\dot{I}_j =& \beta_j \big( \abs{I_j} + \eta_{C,j}  \abs{C_j} + \eta_{A,j}  \abs{A_j} \big) S_j
%	- \left(\rho_j + \phi_j + \mu_j\right) I_j 
%	+ \omega_j \abs{C_j} + \alpha_j \abs{A_j}\\[0.2 cm]
%	&- \varepsilon_I \gamma_j I_j
%	+ \varepsilon_I \displaystyle\sum_{\substack{k = 1\\ k \neq j}}^n L_{k,j}  \abs{I_k}, \\[0.2 cm]
%	\dot{C}_j =& \phi_j I_j - (\omega_j + \mu_j)C_j
%	- \varepsilon_C \gamma_j C_j
%	+ \varepsilon_C \displaystyle\sum_{\substack{k = 1\\ k \neq j}}^n L_{k,j}  \abs{C_k},\\[0.2 cm]
%	\dot{A}_j =& \rho_j \, I_j - (\alpha_j + \mu_j + d_j) A_j
%	- \varepsilon_A \gamma_j A_j
%	+ \varepsilon_A \displaystyle\sum_{\substack{k = 1\\ k \neq j}}^n L_{k,j}  \abs{A_k},
%	\end{cases}
%	\label{eq:auxiliary-problem}
%	\end{equation}
	for each $j \in \lbrace 1,\,\dots,\,n \rbrace$,
	where the coefficient $\gamma_j$ corresponds to the fluxes exiting from node $j$,
	and is given by
	\[
	\gamma_j = \displaystyle\sum_{\substack{k = 1\\ k \neq j}}^n L_{k,j}.
	\]
	Let us denote by $\tilde{X}(t,\,X_0)$ the solution of the auxiliary problem \eqref{eq:auxiliary-problem},
	stemming from the same initial condition $X_0 \in (\R^+)^{4n}$,
	defined on $[0,\,\tilde{T}]$.
	
	Applying Lemma~\ref{lemma:differential-equation},
	we easily prove that the components of $\tilde{X}(t,\,X_0)$ are non-negative.
	This implies that $\tilde{X}(t,\,X_0)$ is also a solution of the Cauchy problem \eqref{eq:Cauchy-problem-complex-network}
	on $[0,\,\tilde{T}]$.
	By uniqueness, we have $\tilde{X}(t,\,X_0) = X(t,\,X_0)$ for all $t \in [0,\,T] \cap [0,\,\tilde{T}]$.
	Finally, it is seen that $T = \tilde{T}$, which achieves the proof.
\end{proof}

\subsection{Boundedness of the solutions of the complex network problem}

Let us introduce the minimum mortality rate $\mu_0$ defined by
\[
\mu_0 = \min\limits_{1 \leq j \leq n} \mu_j,
\]
the positive coefficient $\Lambda_0$ defined by
\[
\Lambda_0 = \displaystyle\sum_{j=1}^n \Lambda_j,
\]
and the compact region
\begin{equation}
\Omega = \left\lbrace (x_j)_{1\leq j \leq 4n} \in (\R^+)^{4n}~;~\displaystyle\sum_{j=1}^{4n} x_j \leq \frac{\Lambda_0}{\mu_0} \right\rbrace.
\label{eq:region-Omega}
\end{equation}
The total population in the complex network, defined by
\[
N(t) = \displaystyle\sum_{j=1}^n \big[S_j(t) + I_j(t) + C_j(t) + A_j(t) \big], \quad t \in [0,\,T],
\]
satisfies
\[
\dot{N}(t) \leq - \mu_0 N(t) + \Lambda_0,
\quad t \in [0,\,T],
\]
since the matrix of connectivity $L$ is a zero column sum matrix.
Applying Lemma~\ref{lemma:differential-inequality} leads to
\[
N(t) \leq  \left[N(0) - \frac{\Lambda_0}{\mu_0}\right]e^{-\mu_0 t} + \frac{\Lambda_0}{\mu_0}, \quad t \in [0,\,T],
\]
thus we obtain the following theorem.

\begin{theo}
	The region $\Omega$ defined by \eqref{eq:region-Omega} is positively invariant under the flow induced
	by the complex network \eqref{eq:SICA-network}.
	\label{theo:boundedness-solutions}
\end{theo}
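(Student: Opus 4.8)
The plan is to combine the non-negativity already established in Theorem~\ref{theo:non-negativity-solutions} with a scalar differential inequality for the total population $N(t)$, and then to invoke Lemma~\ref{lemma:differential-inequality} to close the argument. First I would fix an initial condition $X_0 \in \Omega$; by the definition \eqref{eq:region-Omega} this means $X_0 \in (\R^+)^{4n}$ together with $N(0) \leq \Lambda_0 / \mu_0$. By Theorem~\ref{theo:non-negativity-solutions}, the corresponding solution $X(t,\,X_0)$ of \eqref{eq:Cauchy-problem-complex-network} stays in $(\R^+)^{4n}$ on its interval of existence $[0,\,T]$, so that every compartment $S_j,\,I_j,\,C_j,\,A_j$ remains non-negative. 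This non-negativity is precisely what is needed to control the sign of the mortality and AIDS-induced death terms below.

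Next I would differentiate $N(t) = \sum_{j=1}^n [S_j + I_j + C_j + A_j]$ along the flow, summing the four blocks of \eqref{eq:vertex-j-state-equations} over all vertices. Within each vertex the internal transfer terms cancel in pairs: the infection term $\beta_j(\ldots)S_j$ appears with opposite signs in $\dot S_j$ and $\dot I_j$, and likewise the terms $\phi_j I_j$, $\rho_j I_j$, $\omega_j C_j$ and $\alpha_j A_j$ cancel between the relevant equations. The crucial point is the vanishing of the coupling contribution: for each compartment the double sum $\sum_{j=1}^n \sum_{k=1}^n L_{j,k}\,S_k = \sum_{k=1}^n S_k \big( \sum_{j=1}^n L_{j,k} \big)$ is zero because $L$ is a zero column sum matrix, and the same holds for $I$, $C$ and $A$. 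What survives is
\[
\dot N(t) = \sum_{j=1}^n \Lambda_j - \sum_{j=1}^n \mu_j \big( S_j + I_j + C_j + A_j \big) - \sum_{j=1}^n d_j A_j.
\]
Using $\mu_j \geq \mu_0$ together with the non-negativity of each compartment to bound the mortality sum from above by $\mu_0 N(t)$, and discarding the non-positive AIDS death term $-\sum_{j=1}^n d_j A_j$, I obtain the differential inequality $\dot N(t) \leq \Lambda_0 - \mu_0 N(t)$ announced above.

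Finally I would apply Lemma~\ref{lemma:differential-inequality} with $\delta_1 = \mu_0$ and $\delta_2 = \Lambda_0$, yielding $N(t) \leq [N(0) - \Lambda_0/\mu_0]\,e^{-\mu_0 t} + \Lambda_0/\mu_0$ on $[0,\,T]$. Since $X_0 \in \Omega$ forces $N(0) - \Lambda_0/\mu_0 \leq 0$, the exponential term is non-positive and hence $N(t) \leq \Lambda_0/\mu_0$ for all $t \in [0,\,T]$. Together with the non-negativity of the components this shows $X(t,\,X_0) \in \Omega$ throughout $[0,\,T]$, i.e. $\Omega$ is positively invariant; a standard continuation argument then extends the solution globally in time, since it remains confined to the compact set $\Omega$. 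The only genuinely delicate step is the computation of $\dot N(t)$: it rests entirely on the two cancellations — the internal transfer terms within each node and, above all, the coupling terms across nodes via the zero column sum property of $L$ — so I expect that to be where the argument must be carried out with care.
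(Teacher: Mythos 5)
Your proposal is correct and follows essentially the same route as the paper: differentiate the total population $N(t)$, use the zero column sum property of $L$ to cancel the coupling terms, obtain $\dot N(t) \leq -\mu_0 N(t) + \Lambda_0$, and close with Lemma~\ref{lemma:differential-inequality}. You merely spell out details the paper leaves implicit (the internal cancellations, the reliance on Theorem~\ref{theo:non-negativity-solutions} for non-negativity, and the final deduction $N(0) \leq \Lambda_0/\mu_0 \Rightarrow N(t) \leq \Lambda_0/\mu_0$), which is a faithful elaboration rather than a different argument.
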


\begin{rema}
	It is easily seen that the positively invariant region $\Omega$ for the complex network problem \eqref{eq:SICA-network}
	satisfies
	\[
	\prod_{i=1}^n \Omega_i \subset \Omega,
	\]
	where $\Omega_i = \left\lbrace (x_j)_{1\leq j \leq 4} \in (\R^+)^{4}~;~\displaystyle\sum_{j=1}^{4} x_j \leq \frac{\Lambda_i}{\mu_i} \right\rbrace$
	corresponds to the positively invariant region of the node $(i)$ in absence of coupling.
	Roughly speaking, the couplings can enlarge the phase space of the flow induced by the network problem.
\end{rema}

\section{Stability analysis of the complex network}
\label{sec:stabnetwork}

In this section, we explore the effect of the couplings on the dynamics of the complex network \eqref{eq:SICA-network}.
We use symbolic computational methods, in the case of small networks.
Furthermore, we prove the existence of a unique disease-free equilibrium which is globally asymptotically stable.

\subsection{Asymmetric two-nodes network}

Let us consider a two-nodes networks,
with one vertex $(1)$ for which $R_0 < 1$,
another vertex $(2)$ for which $R_0 > 1$,
and a directed connection from vertex $(1)$ towards vertex $(2)$ (see Figure~\ref{fig:two-nodes-network}).
In that case, the matrix of connectivity is given by
\[
L =
\left[
\begin{array}{l l}
- L_{2,1}		&		0		\\
+ L_{2,1}		&		0		
\end{array}
\right],
\]
and therefore the equations of the network read
\begin{equation}
\begin{cases}
\dot{S}_1 = \Lambda_1 - \beta_1 \left( I_1 + \eta_{C,1} \, C_1 + \eta_{A,1}  A_1 \right) S_1 - \mu_1 S_1
- L_{2,1} \varepsilon_S S_1,\\[0.2 cm]
\dot{I}_1 = \beta_1 \left( I_1 + \eta_{C,1} \, C_1  
+ \eta_{A,1}  A_1 \right) S _1- \left(\rho_1 + \phi_1 + \mu_1\right) I _1
+ \omega C_1 + \alpha A_1
- L_{2,1} \varepsilon_I I_1, \\[0.2 cm]
\dot{C}_1 = \phi_1 I_1 - (\omega_1 + \mu_1)C_1
- L_{2,1} \varepsilon_C C_1,\\[0.2 cm]
\dot{A}_1 =  \rho_1 \, I_1 - (\alpha_1 + \mu_1 + d_1) A_1
- L_{2,1} \varepsilon_A A_1, \\[0.3cm]
\dot{S}_2 = \Lambda_2 - \beta_2 \left( I_2 + \eta_{C,2} \, C_2 + \eta_{A,2}  A_2 \right) S_2 - \mu_2 S_2
+ L_{2,1} \varepsilon_S S_1,\\[0.2 cm]
\dot{I}_2 = \beta_2 \left( I_2 + \eta_{C,2} \, C_2  
+ \eta_{A,2}  A_2 \right) S_2 - \left(\rho_2 + \phi_2 + \mu_2\right) I_2 
+ \omega_2 C_2 + \alpha_2 A_2
+ L_{2,1} \varepsilon_I I_1, \\[0.2 cm]
\dot{C}_2 = \phi_2 I_2 - (\omega_2 + \mu_2)C_2
+ L_{2,1} \varepsilon_C C_1,\\[0.2 cm]
\dot{A}_2 =  \rho_2 \, I_2 - (\alpha_2 + \mu_2 + d_2) A_2
+ L_{2,1} \varepsilon_A A_1,
\end{cases}
\label{eq:two-nodes-network}
\end{equation}
where we omit the dependence in $t$ in order to lighten our notations.

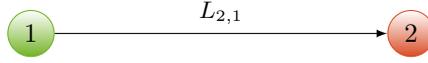
\begin{figure}[h!]
	\centering
	\tikzstyle{greencircle} = [circle,
	draw,
	green!40!brown,
	bottom color=green!40!brown,
	top color= white,
	text=black]
	\tikzstyle{redcircle} = [circle,
	draw,
	red!40!brown,
	bottom color=red!40!brown,
	top color= white,	
	text=black]
	\begin{tikzpicture}
	\node[greencircle] (x1) at (-2.5,0) {$1$};
	\node[redcircle] (x2) at (2.5,0) {$2$};
	\draw[>=latex,->] (x1) to  node [pos=0.5,above]{\small $L_{2,1}$}(x2);
	\end{tikzpicture}
	\caption{Asymmetric two-nodes network, built with two nonidentical instances of system \eqref{eq:SICA-model}.
		The green node (1) is associated with an instance of system \eqref{eq:SICA-model} for which the basic reproduction number $R_0$ satisfies
		$R_0 < 1$, whereas the red node (2) is coupled with an instance of system \eqref{eq:SICA-model} for which $R_0 > 1$.}
	\label{fig:two-nodes-network}
\end{figure}

Roughly speaking, the coupling coefficient $L_{2,1}$ acts on vertex $(1)$ as if the mortality rate $\mu_1$ increases,
which changes the value of the basic reproduction number on vertex $(1)$.
Thus the following question arises.
\emph{How does $R_0$ vary when $L_{2,1}$ increases?}
Proposition~\ref{prop:R0:2nodes} below partly answers this question. 

First, we easily prove that system \eqref{eq:two-nodes-network} admits a disease-free equilibrium point $\Sigma^0$
given by
\begin{equation}
\label{eq:DFE:2nodes}
\Sigma^0 = (S_1^0, I_1^0, C_1^0, A_1^0, S_2^0, I_2^0, C_2^0, A_2^0)
= \left( \frac{ \Lambda_1}{ L_{2,1}   \varepsilon_S +\mu_1}, 0, 0, 0,
\frac{ L_{2,1} \varepsilon_S ( \Lambda_1+ \Lambda_2 )
	+ \Lambda_2 \mu_1}{\mu_2\, \left( L_{2,1} \varepsilon_S +\mu_1 \right) }, 0, 0, 0 \right) \, .
\end{equation}

Following the method proposed in \cite{driessche2002reproduction} for computing the basic reproduction number, we have the following result. 

\begin{prop}
	\label{prop:R0:2nodes}
	The basic reproduction number of model \eqref{eq:two-nodes-network} is
	\begin{equation}
	\label{eq:R0:2nodes}
	R_0 = \max \{ R_{0,1}, R_{0,2} \} \, 
	\end{equation}
	where
	\begin{equation}
	\label{eq:R01:2nodes}
	R_{0,1} = \frac{\mathcal{N}_1}{\mathcal{D}_1}
	\end{equation}
	and
	\begin{equation}
	\label{eq:R02:2nodes}
	R_{0,2} = \frac{\mathcal{N}_2 (\Lambda_1  L_{2,1} \varepsilon_S + \Lambda_2 L_{2,1} \varepsilon_S + \Lambda_2 \mu_1)}
	{\mathcal{D}_2 \left[(L_{2,1} \varepsilon_S+\mu_1 ) \mu_2 \right]}
	\end{equation}
	with
	\begin{equation*}
	\begin{split}
	&\mathcal{N}_2 = \beta_2 \left[  \xi_{2,2}  \left( \xi_{1,2} +\rho_2\, \eta_{A, 2} \right) + \eta_{C, 2} \,\phi_2 \, \xi_{1, 2} \right],\\ 
	&\mathcal{D}_2 = \mu_2 \left[  \xi_{2, 2}  \left( \rho_2 + \xi_{1, 2}\right) +\phi_2\, \xi_{1,2} +\rho_2 \,d_2 \right] +\rho_2 \omega_2 d_2,
	\end{split}
	\end{equation*}
	and
	\begin{equation*}
	\begin{split}
	\mathcal{N}_1 &= \Lambda_1 \beta_1 \left(L_{2,1}^{2} \varepsilon_A \varepsilon_C
	+ \left( \eta_{A,1} \rho_1 \varepsilon_C + \eta_{C,1} \phi_1 \varepsilon_A
	+ \xi_{1,1} \varepsilon_C + \xi_{2,1} \varepsilon_A  \right) L_{2,1} 
	+ \xi_{2,1} (\xi_{1,1} + \rho_1 \eta_{A,1} ) +\xi_{1,1} \eta_{C,1} \phi_1 \right),\\
	\mathcal{D}_1 &= (L_{2,1} \varepsilon_S + \mu_1)
	\left( L_{2,1}^3 \varepsilon_A \varepsilon_C \varepsilon_I +  p_2 L_{2,1}^2 +  p_1 L_{2,1} +  p_0  \right),\\
	p_0 &= \mu_1 \big[\xi_{2,1} (\rho_1+\xi_{1,1})+\phi_1 \xi_{1,1} + d_1 \rho_1\big]+ d_1 \omega_1 \rho_1 ,\\
	p_1 &= \big[\xi_{1,1} (\mu_1 + \phi_1) + (d_1 + \mu_1) \rho_1  \big]\varepsilon_C
	+ \xi_{1,1} \xi_{2,1}  \varepsilon_I
	+ \big[ (\mu_1  +  \rho_1)\xi_{2,1}  + \mu_1 \phi_1  \big]\varepsilon_A,  \\
	p_2 &= \big[(\rho_1+\phi_1+\mu_1)\varepsilon_A+\varepsilon_I \xi_{1,1}\big]\varepsilon_C+\varepsilon_A \varepsilon_I \xi_{2,1}.
	\end{split}
	\end{equation*}
\end{prop}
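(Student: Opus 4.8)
The plan is to apply the next-generation matrix method of \cite{driessche2002reproduction} to system \eqref{eq:two-nodes-network} linearized at the disease-free equilibrium $\Sigma^0$ given by \eqref{eq:DFE:2nodes}. First I would isolate the six infected compartments $(I_1, C_1, A_1, I_2, C_2, A_2)$ and write their dynamics in the form $\dot{y} = \mathcal{F}(y) - \mathcal{V}(y)$, where $\mathcal{F}$ collects only the genuine new-infection terms $\beta_j(I_j + \eta_{C,j}C_j + \eta_{A,j}A_j)S_j$ (placed in the $\dot{I}_j$ rows), and $\mathcal{V}$ collects everything else, including the removal terms, the internal transfers between compartments, and, crucially, all the migration terms. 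The migrations transport already-infected individuals rather than generating infections, so they belong in $\mathcal{V}$: the outgoing flux at vertex $(1)$ augments its diagonal removal rates by $L_{2,1}\varepsilon_I$, $L_{2,1}\varepsilon_C$, $L_{2,1}\varepsilon_A$, while the incoming flux at vertex $(2)$ appears as an off-diagonal coupling from the node-$(1)$ compartments.

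I would then evaluate the Jacobians $F = D\mathcal{F}(\Sigma^0)$ and $V = D\mathcal{V}(\Sigma^0)$. The key structural observation is that, with the compartments ordered node by node, $F$ is block diagonal, $F = \diag{F_1, F_2}$ with each $3\times 3$ block $F_j$ carrying nonzero entries only in its first (infection) row, while $V$ is block lower triangular,
\begin{equation*}
V = \begin{pmatrix} V_1 & 0 \\ -L_{2,1}E & V_2 \end{pmatrix}, \qquad E = \diag{\varepsilon_I, \varepsilon_C, \varepsilon_A},
\end{equation*}
because the directed edge sends infected individuals from $(1)$ into $(2)$ but not conversely. Here $V_2$ is exactly the standard SICA transition matrix of vertex $(2)$, whereas $V_1$ is that of vertex $(1)$ with its diagonal augmented by the outflux terms. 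Since $V$ is a non-singular $M$-matrix (adding positive outflux terms to the diagonal preserves this property), $V^{-1}$ exists, is non-negative, and is again block lower triangular; consequently $FV^{-1}$ is block lower triangular, with diagonal blocks $F_1 V_1^{-1}$ and $F_2 V_2^{-1}$. The spectral radius of a block-triangular matrix is the maximum of the spectral radii of its diagonal blocks, which is precisely what produces the formula $R_0 = \max\{R_{0,1}, R_{0,2}\}$ of \eqref{eq:R0:2nodes}.

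It then remains to identify each diagonal block's spectral radius with the stated expression. Because $F_j$ has a single nonzero row, $F_j V_j^{-1}$ is a rank-one matrix whose only nonzero eigenvalue equals its $(1,1)$ entry,
\begin{equation*}
\rho(F_j V_j^{-1}) = \beta_j S_j^0 \big[(V_j^{-1})_{11} + \eta_{C,j}(V_j^{-1})_{21} + \eta_{A,j}(V_j^{-1})_{31}\big],
\end{equation*}
so I would compute the first column of $V_j^{-1}$ by Cramer's rule, that is, as the relevant cofactors of $V_j$ divided by $\det V_j$. For vertex $(2)$ the determinant is $\det V_2 = \mathcal{D}_2$ and the cofactor combination reproduces $\mathcal{N}_2$, giving $\rho(F_2 V_2^{-1}) = S_2^0\mathcal{N}_2/\mathcal{D}_2$; substituting the value of $S_2^0$ from \eqref{eq:DFE:2nodes} yields exactly \eqref{eq:R02:2nodes}. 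For vertex $(1)$ the same cofactor computation, now carried out on the augmented matrix $V_1$, produces the numerator $\mathcal{N}_1$ together with the factor $\det V_1$; combined with $S_1^0 = \Lambda_1/(L_{2,1}\varepsilon_S + \mu_1)$ this gives $R_{0,1} = \mathcal{N}_1/\big[(L_{2,1}\varepsilon_S + \mu_1)\det V_1\big] = \mathcal{N}_1/\mathcal{D}_1$.

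I expect the main obstacle to be purely computational rather than conceptual: expanding the $3\times 3$ determinant $\det V_1$ symbolically and collecting it in powers of $L_{2,1}$ to recognize the coefficients $p_0, p_1, p_2$ and the leading term $L_{2,1}^3\varepsilon_A\varepsilon_C\varepsilon_I$ stated in the proposition. The constant term $p_0$ should coincide with the uncoupled denominator $\mathcal{D}_1\big|_{L_{2,1}=0}$, a useful consistency check, while the degree-three leading term reflects the three augmented diagonal entries. A secondary point, which I would verify before invoking the theorem of \cite{driessche2002reproduction}, is that the splitting $\mathcal{F} - \mathcal{V}$ meets its hypotheses, in particular that $V$ is a non-singular $M$-matrix despite the added migration terms, so that $R_0 = \rho(FV^{-1})$ is genuinely the threshold parameter.
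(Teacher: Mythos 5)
Your proposal is correct and rests on the same foundation as the paper's proof, namely the next-generation matrix method of \cite{driessche2002reproduction} evaluated at the DFE \eqref{eq:DFE:2nodes}, but the execution is genuinely different in two respects. First, the splitting is not the same: the paper places the migration term $L_{2,1}\varepsilon_I I_1$ in $\mathcal{F}$ (infected migrants arriving at node $(2)$ are counted as ``new infections'' there), whereas you put every migration term in $\mathcal{V}$. Since different admissible splittings can in general produce different values of $\rho(FV^{-1})$, this discrepancy deserves a remark; here it is harmless for exactly the structural reason you identify: the disputed term only changes the lower-left block of $F$, and in
\begin{equation*}
FV^{-1} \;=\; \begin{pmatrix} F_1V_1^{-1} & 0 \\ \ast & F_2V_2^{-1} \end{pmatrix}
\end{equation*}
the off-diagonal block never enters the spectrum, so both splittings share the diagonal blocks $F_1V_1^{-1}$ and $F_2V_2^{-1}$ and hence the same $R_0$. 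Second, the paper keeps all eight compartments and obtains the spectrum of the $8\times 8$ matrix $F_0V_0^{-1}$ by brute-force symbolic computation (six zero eigenvalues plus the two stated ones), while you reduce to the six infected compartments, obtain $R_0=\max\{R_{0,1},R_{0,2}\}$ conceptually from block lower triangularity, and extract each block's spectral radius from the rank-one structure of $F_j$ together with Cramer's rule. Your route explains \emph{why} the maximum of two node-wise quantities appears and why $\mathcal{D}_1$ factors as $(L_{2,1}\varepsilon_S+\mu_1)\det V_1$, facts the paper's computation leaves opaque; the paper's route buys certainty about the explicit coefficients without any structural observation. The computations you defer do check out: $\det V_2=\mathcal{D}_2$, the cofactor combination for node $(2)$ gives $S_2^0\mathcal{N}_2/\mathcal{D}_2$, and expanding $\det V_1$ in powers of $L_{2,1}$ reproduces the leading term $L_{2,1}^3\varepsilon_A\varepsilon_C\varepsilon_I$ and the coefficients $p_2$, $p_1$, $p_0$ of the proposition, with $p_0$ equal to the uncoupled denominator as you predict.
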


\begin{proof}
	
	Let $\mathcal{F}_i(t)$ be the rate at which new infections appear
	in the $i$-th compartment and
	$\mathcal{V}_i^+(t)$ be the ``individuals'' transfer rate 
	into the $i$-th compartment in all other ways. Similarly, let
	$\mathcal{V}_i^-(t)$ denote the ``individuals'' transfer rate 
	out of the $i$-th compartment, for which
	\[
	\left[
	\dot{S}_1(t), \dot{I}_1(t), \dot{C}_1(t), \dot{A}_1(t), \dot{S}_2(t), \dot{I}_2(t), \dot{C}_2(t), \dot{A}_2(t)
	\right]^T
	=\mathcal{F}(t)-\mathcal{V}(t)=\mathcal{F}(t)-\big(\mathcal{V}^-(t)-\mathcal{V}^+(t)\big).
	\]	
	
	Therefore, we take
	\begin{align*}
	\mathcal{F}(t)=\left[
	\begin{matrix}
	0\\
	\displaystyle S_1 \beta_1 (\eta_{A,1} A_1 + \eta_{C,1} C_1  + I_1)\\
	0\\
	0\\
	0\\
	\displaystyle L_{2,1} \varepsilon_I I_1+S_2 \beta_2 (\eta_{A,2} A_2 + \eta_{C,2} C_2 + I_2)\\
	0\\
	0
	\end{matrix}
	\right],
	\quad \mathcal{V}^+(t)=\left[
	\begin{matrix}
	\displaystyle\Lambda_1 \\
	\displaystyle\omega_1 C_1 + \alpha_1 A_1 \\
	\displaystyle\phi_1 I_1 \\
	\displaystyle\rho_1 \, I_1 \\
	\displaystyle\Lambda_2 + L_{2,1} \varepsilon_S S_1\\
	\displaystyle\omega_2 C_2 + \alpha_2 A_2  \\
	\displaystyle\phi_2 I_2 + L_{2,1} \varepsilon_C C_1 \\
	\displaystyle\rho_2 \, I_2 + L_{2,1} \varepsilon_A A_1
	\end{matrix}
	\right]
	\end{align*}
	and
	\begin{align*}
	\mathcal{V}^-(t)=\left[
	\begin{matrix}
	\displaystyle \beta_1 \left( I_1 + \eta_{C,1} \, C_1 + \eta_{A,1}  A_1 \right) S_1 + \mu_1 S_1 + L_{2,1} \varepsilon_S S_1 \\
	\displaystyle \left(\rho_1 + \phi_1 + \mu_1\right) I _1 + L_{2,1} \varepsilon_I I_1\\
	\displaystyle(\omega_1 + \mu_1)C_1 + L_{2,1} \varepsilon_C C_1 \\
	\displaystyle(\alpha_1 + \mu_1 + d_1) A_1 + L_{2,1} \varepsilon_A A_1 \\
	\displaystyle \beta_2 \left( I_2 + \eta_{C,2} \, C_2 + \eta_{A,2}  A_2 \right) S_2 + \mu_2 S_2 \\
	\displaystyle\left(\rho_2 + \phi_2 + \mu_2\right) I_2 \\
	\displaystyle(\omega_2 + \mu_2)C_2 \\
	\displaystyle(\alpha_2 + \mu_2 + d_2) A_2
	\end{matrix}\right]  \, .
	\end{align*}
	
	The Jacobian matrices $F$ of $\mathcal{F}(t)$ and $V$ of $\mathcal{V}(t)$ are given by
	\[
	F= \big[F^1,\,F^2\big], \quad V = \big[V^1,\,V^2\big],
	\]
	where
	\begin{equation*}
	\begin{split}
	&F^1 = 
	\left[
	\begin {array}{cccc}
	0 	& 0 	& 0 	& 0 																									\\ 
	\beta_1  \left( \eta_{A,1} A_1 + \eta_{C,1} C_1 + I_1\right) & S_1 \beta_1& S_1 \beta_1 \eta_{C,1} & S_1 \beta_1 \eta_{A,1} \\ 
	0 	& 0 	& 0 	& 0 \\ 
	0 	& 0 	& 0 	& 0 \\ 
	0 	& 0 	& 0 	& 0 \\ 
	0	&L_{2,1} \varepsilon_I & 0 & 0 \\ 
	0 	& 0 	& 0 	& 0 \\ 
	0 	& 0 	& 0 	& 0 
	\end{array}
	\right], \\
	&F^2 =
	\left[
	\begin{array}{cccc}
	0 & 0 & 0 & 0\\ 
	0 & 0 & 0 & 0\\ 
	0 & 0 & 0 & 0\\ 
	0 & 0 & 0 & 0\\ 
	0 & 0 & 0 & 0\\ 
	\beta_2 \left( \eta_{A,2} A_2  + \eta_{C,2} C_2  + I_2  \right) & S_2 \beta_2 & S_2 \beta_2 \eta_{C,2} & S_2 \beta_2 \eta_{A,2}\\ 
	0 & 0 & 0 & 0\\ 
	0 & 0 & 0 & 0
	\end{array}
	\right], \\
	&V^1 =
	\left[
	\begin{array}{cccc} 
	\beta_1 (A_1 \eta_{A,1} + C_1 \eta_{C,1} + I_1) + L_{2,1} \varepsilon_S +\mu_1 & S_1 \beta_1 & S_1 \beta_1 \eta_{C,1} & S_1 \beta_1 \eta_{A,1} \\ 
	0	& L_{2,1} \varepsilon_I +\mu_1+ \phi_1 + \rho_1 & -\omega_1 & -\alpha_1 \\ 
	0	& -\phi_1 & L_{2,1} \varepsilon_C + \xi_{2,1} & 0 \\
	0	& -\rho_1 & 0 & L_{2,1} \varepsilon_A + \xi_{1,1} \\ 
	- L_{2,1} \varepsilon_S & 0 & 0 & 0 \\ 
	0 	& 0 & 0 & 0 \\ 
	0 	& 0 &-L_{2,1} \varepsilon_C & 0\\ 
	0 	& 0 & 0 &-L_{2,1} \varepsilon_A
	\end{array}
	\right], \\
	&V^2 =
	\left[
	\begin{array}{cccc} 
	0 & 0 & 0 & 0\\ 
	0 & 0 & 0 & 0 \\ 
	0 & 0 & 0 & 0\\
	0 & 0 & 0 & 0\\ 
	\beta_2 (A_2 \eta_{A,2} +  C_2 \eta_{C,2} + I_2) + \mu_2& S_2 \beta_2 & S_2  \beta_2 \eta_{C,2} & S_2 \beta_2 \eta_{A,2}\\ 
	0& \rho_2 + \phi_2 + \mu_2 & -\omega_2 & -\alpha_2 \\ 
	0&-\phi_2& \xi_{2,2} & 0\\ 
	0&-\rho_2 & 0& \xi_{1,2}
	\end {array}
	\right],
	\end{split}
	\end{equation*}
	and
	\begin{equation*}
	\begin{split}
	&\xi_{1,1} = \alpha_1 + \mu_1 + d_1, \quad 
	\xi_{2,1} = \omega_1 + \mu_1,\\
	&\xi_{1,2} = \alpha_2 + \mu_2 + d_2, \quad 
	\xi_{2,2} = \omega_2 + \mu_2 \, .
	\end{split}
	\end{equation*}
	Evaluating the matrices $F$ and $V$ at the disease-free equilibrium $\Sigma^0$ given by \eqref{eq:DFE:2nodes},
	we find
	\[
	F_0 = \big[ F_0^1,\,F_0^2 \big], \quad
	V_0 = \big[ V_0^1,\,V_0^2 \big],
	\]
	with
	\begin{equation*}
	\begin{split}
	&F_0^1=
	\left[
	\begin{array}{cccc} 
	0 & 0 & 0 & 0\\ 
	0 & \frac{\Lambda_1  \beta_1}{ L_{2,1} \varepsilon_S +\mu_1}&
	\frac{ \Lambda_1  \beta_1 \eta_{C,1} }{ L_{2,1} \varepsilon_S + \mu_1}&
	\frac{\Lambda_1 \beta_1 \eta_{A,1}}{ L_{2,1} \varepsilon_S +\mu_1}\\ 
	0 & 0 & 0 & 0\\ 
	0 & 0 & 0 & 0\\ 
	0 & 0 & 0 & 0\\ 
	0 &  L_{2,1} \varepsilon_I  & 0 & 0\\ 
	0 & 0 & 0 & 0\\ 
	0 & 0 & 0 & 0
	\end{array}
	\right], \\
	&F_0^2 =
	\left[
	\begin{array}{cccc} 
	0 & 0 & 0 & 0\\ 
	0 & 0 & 0 & 0\\ 
	0 & 0 & 0 & 0\\ 
	0 & 0 & 0 & 0\\ 
	0 & 0 & 0 & 0\\ 
	0&\frac { \left( \Lambda_1 L_{2,1} \varepsilon_S +\Lambda_2 L_{2,1} \varepsilon_S + \Lambda_2 \mu_1 \right) \beta_2}{\mu_2  \left( L_{2,1} \varepsilon_S +\mu_1 \right) } & \frac { \left( \Lambda_1 L_{2,1} \varepsilon_S +\Lambda_2 L_{2,1} \varepsilon_S +\Lambda_2 \mu_1 \right) \beta_2 \eta_{C,2}}{\mu_2
		\left( L_{2,1} \varepsilon_S +\mu_1 \right) }& \frac { \left( 
		\Lambda_1 L_{2,1} \varepsilon_S +\Lambda_2 L_{2,1}  \varepsilon_S + \Lambda_2  \mu_1 \right) \beta_2 \eta_{A,2}}{\mu_2 \left( L_{2,1}
		\varepsilon_S + \mu_1 \right) }\\ 
	0 & 0 & 0 & 0\\ 
	0 & 0 & 0 & 0 
	\end{array}
	\right], \\
	&V_0^1=
	\left[ 
	\begin{array}{cccc} 
	L_{2,1} \varepsilon_S + \mu_1 & 
	\frac {\Lambda_1\,\beta_1}{ L_{2,1} \varepsilon_S + \mu_1}&
	\frac { \Lambda_1 \beta_1 \eta_{C,1} }{ L_{2,1} \varepsilon_S +\mu_1}&{
		\frac { \Lambda_1 \beta_1 \eta_{A,1}}{L_{2,1} \varepsilon_S + \mu_1}}\\ 
	0& L_{2,1} \varepsilon_I + \mu_1 + \phi_1 + \rho_1 & -\omega_1 & -\alpha_1\\ 
	0 & -\phi_1 &L_{2,1}
	\varepsilon_C  + \xi_{2,1} & 0\\ 
	0& -\rho_1 & 0 & L_{2,1} \varepsilon_A +\alpha_1 d_1 + \mu_1\\ 
	- L_{2,1} \varepsilon_S & 0 & 0 & 0 \\
	0 & 0 & 0 & 0\\ 
	0 & 0 & -L_{2,1} \varepsilon_C & 0 \\ 
	0 & 0 & 0 &- L_{2,1} \varepsilon_A
	\end{array}
	\right],\\
	&V_0^2=
	\left[ 
	\begin{array}{cccc} 
	0 & 0 & 0 & 0\\ 
	0 & 0 & 0 & 0\\ 
	0 & 0 & 0 & 0\\ 
	0 & 0 & 0 & 0 \\ 
	\mu_2 & 
	\frac { \left( \Lambda_1 L_{2,1} \varepsilon_S + \Lambda_2 L_{2,1} \varepsilon_S + \Lambda_2  \mu_1 \right) \beta_2}{\mu_2 \left( L_{2,1} \varepsilon_S + \mu_1 \right)} & \frac{ \left( \Lambda_1 L_{2,1} \varepsilon_S +\Lambda_2 L_{2,1} \varepsilon_S + \Lambda_2 \mu_1 \right) \beta_2 \eta_{C,2} }{ \mu_2 \left( L_{2,1} \varepsilon_S + \mu_1 \right) } & \frac{ \left( \Lambda_1 L_{2,1} \varepsilon_S +
		\Lambda_2 L_{2,1} \varepsilon_S + \Lambda_2 \mu_1 \right) \beta_2 \eta_{A,2} }{ \mu_2  \left( L_{2,1} \varepsilon_S + \mu_1 \right) }\\
	0 & \rho_2 + \phi_2 + \mu_2 & -\omega_2 & -\alpha_2\\ 
	0 & -\phi_2 & \omega_2 + \mu_2 & 0\\ 
	0 &-\rho_2 & 0 &\alpha_2 + \mu_2 + d_2
	\end{array}
	\right].
	\end{split}
	\end{equation*}
	The eigenvalues of the matrix $F_0 V_0^{-1}$ are given by: 
	\begin{equation*}
	\left[ 
	\begin {array}{c c c c c c c c} 
	0,&
	0,&
	0,&
	0,&
	0,&
	0,&
	\frac{\mathcal{N}_1}{\mathcal{D}_1},&
	\frac{\mathcal{N}_2 (\Lambda_1  L_{2,1} \varepsilon_S + \Lambda_2 L_{2,1} \varepsilon_S + \Lambda_2 \mu_1)}{\mathcal{D}_2 ((L_{2,1} \varepsilon_S+\mu_1) \mu_2 )}
	\end {array}
	\right]^T .
	\end{equation*}
	
	The basic reproduction number is given by the dominant eigenvalue of the matrix $F_0 V_0^{-1}$, that is, $R_0$ takes the value given by \eqref{eq:R0:2nodes}. 
	
\end{proof}

\begin{rema}
	We emphasize that $R_{0,1}$ and $R_{0,2}$ correspond to the basic reproduction numbers
	of nodes $(1)$ and $(2)$ respectively, in absence of coupling (that is $\varepsilon_S = \varepsilon_I = \varepsilon_C = \varepsilon_A = 0$).
	Thus, the expression $R_0 = \max(R_{0,1},\,R_{0,2})$ implies that if $R_{0,1}>1$ or $R_{0,2}>1$, then $R_0>1$.
	In other words, the node admitting a basic reproduction number $R_0>1$ drives the other node to a global Endemic Equilibrium (EE).
	It is a work in progress to generalize this pattern to more general topologies (e.g. chain networks).
	However, one should not conclude for the general case that the good solution is to ``cut'' the couplings
	(there may exist an optimal coupling topology which globally tempers the level of infected individuals,
	as we are going to show in Section~\ref{sec:casestudyCV} below).
\end{rema}

We easily prove that
$R_{0,2}$ is an increasing function of $L_{2,1}$.
Indeed, we have
\[
R_{0,2} = k \dfrac{d_1 L_{2,1} + d_2}{d_3 L_{2,1} + d_4},
\]
with $k = \frac{\mathcal{N}_2}{\mathcal{D}_2 \mu_2}$,
$d_1 = \varepsilon_S (\Lambda_1 + \Lambda_2)$,
$d_2 = \Lambda_2 \mu_1$,
$d_3 = \varepsilon_S$ and $d_4 = \mu_1$.
Since $d_1 d_4 - d_2 d_3 = \mu_1 \varepsilon_S \Lambda_1 > 0$,
we can conclude that $R_{0,2}$ is an increasing function of $L_{2,1}$.
Figure~\ref{fig:influence-coupling-R01-R02}(a) illustrates this increasing shape of $R_{0,2}$ with respect to $L_{2,1}$
for the following parameters values:
\[
\begin{split}
&\Lambda_1 = \Lambda_2 = 2, \beta_1 = 0.0015, \beta_2 = 0.001,
\eta_{C,1} = \eta_{C,2} = 0.04,
\eta_{A,1} = \eta_{A,2} = 1.3,
\mu_1 = \mu_2 = \tfrac{1}{70},\\
&\rho_1 = \rho_2 = 0.1,
\varphi_1 = \varphi_2 = 1, \omega_1 = \omega_2 = 0.09,
\alpha_1 = \alpha_2 = 0.33, d_1 = d_2 = 1,
\varepsilon = 0.1.
\end{split}
\]

At the opposite, one can find parameters values for which $R_{0,1}$ is a decreasing function of $L_{2,1}$,
but also other parameters values for which $R_{0,1}$ is an increasing function of $L_{2,1}$ in a
neighborhood of $0$.
Figure~\ref{fig:influence-coupling-R01-R02}(b) presents an example for which $R_{0,1}$ admits a maximum
with respect to $L_{2,1}$; this example has been obtained
for the following parameters values:
\[
\begin{split}
&\Lambda_1 = \Lambda_2 = 1,
\varepsilon = 1,
\eta_{A,1} = \eta_{C,1} = \eta_{A,2} = \eta_{C,2} = 0,
\beta_1 = \beta_2 = 1,
\mu_1 = 10,\\
&d_1 = 1,
\omega_1 = 1,
\varphi_1 = 1,
\rho_1 = 1,
\alpha_1 = 50.
\end{split}
\]

\begin{figure}[h!]
	\centering
	\includegraphics[scale=0.9]{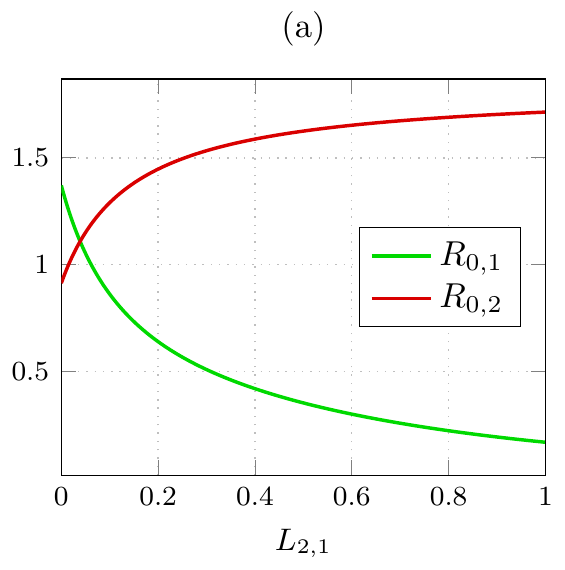}
	\includegraphics[scale=0.9]{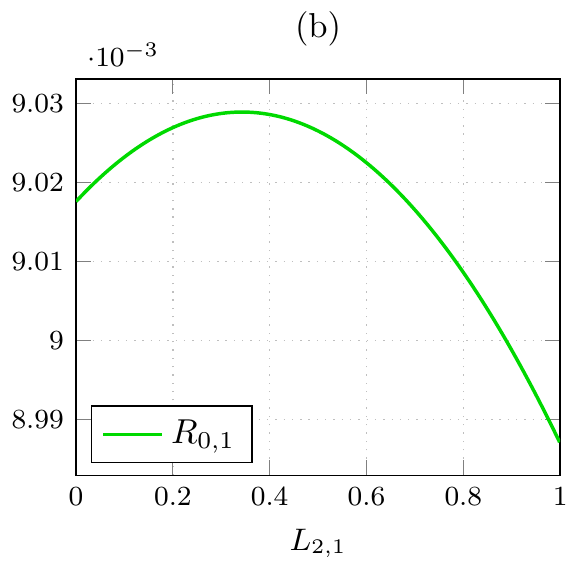}
	\caption{Influence of the coupling on the basic reproduction numbers $R_{0,1}$ and $R_{0,2}$ of system
		\eqref{eq:two-nodes-network} for different parameters values.
		$R_{0,2}$ is an increasing function of the coupling strength $L_{2,1}$ (a),
		whereas $R_{0,1}$ can admit a maximum (b).}
	\label{fig:influence-coupling-R01-R02}
\end{figure}

In parallel, the coupling strengths $\varepsilon_S$, $\varepsilon_I$, $\varepsilon_C$ and
$\varepsilon_A$, stored in matrix $H$ (see section \ref{section:Construction-of-the-complex-network}),
are also observed to play an important role (see Figure \ref{fig:influence-coupling-strengths-R01-R02}).
For the same set of parameters as above, and a frozen coefficient $L_{2,1} = 0.15$, we have
computed the values of the basic reproduction numbers $R_{0,1}$ and $R_{0,2}$ with respect to
a variation of $\varepsilon_S$, $\varepsilon_I$, $\varepsilon_C$ and
$\varepsilon_A$.
It seems that $R_{0,1}$ and $R_{0,2}$ are robust to a variation of the coupling strengths $\varepsilon_I$ and 
$\varepsilon_A$, whereas a variation of $\varepsilon_S$ or $\varepsilon_C$ can induce an important variation
in $R_{0,1}$ and $R_{0,2}$, which can imply a change in the dynamics of both nodes in the complex network \eqref{eq:two-nodes-network}.
Moreover, the coupling strengths $\varepsilon_S$ and $\varepsilon_C$ seem to play antagonistic roles, since
an increase of $\varepsilon_S$ provokes an increase of $R_{0,2}$,
whereas an increase of $\varepsilon_C$ provokes an increase of $R_{0,1}$.

\begin{figure}[h!]
	\centering
	\includegraphics[scale=0.9]{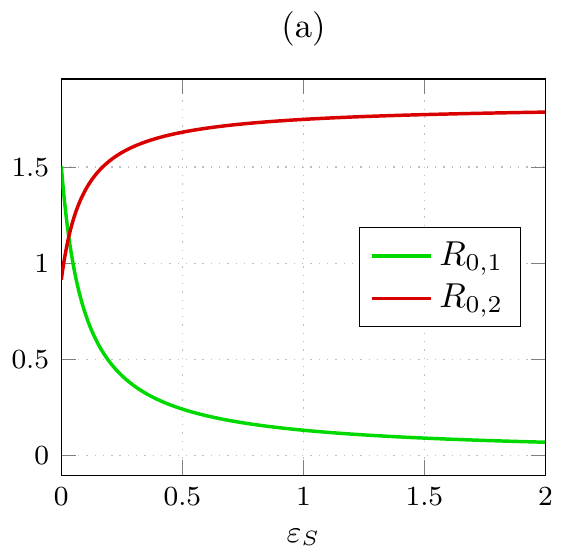}
	\includegraphics[scale=0.9]{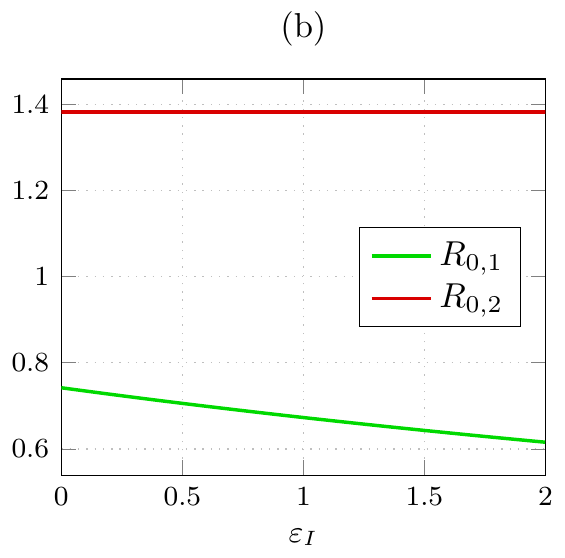}
	
	\includegraphics[scale=0.9]{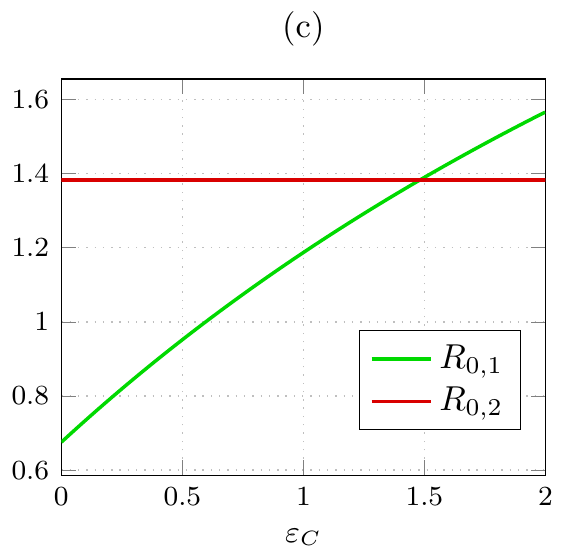}
	\includegraphics[scale=0.9]{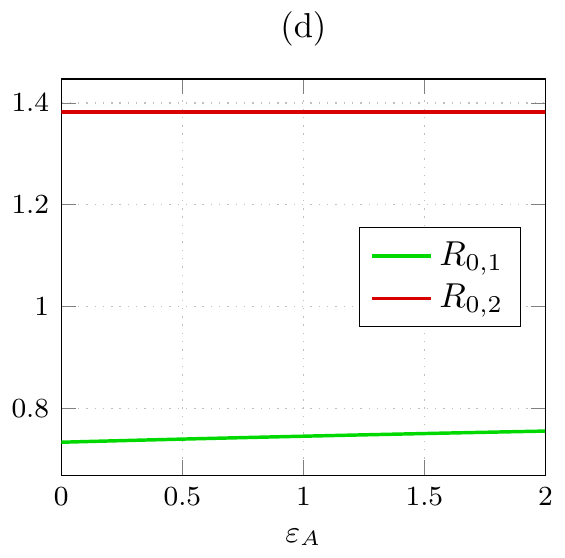}
	\caption{Influence of the coupling strengths $\varepsilon_S$, $\varepsilon_I$,
		$\varepsilon_C$ and $\varepsilon_A$ on the basic reproduction numbers $R_{0,1}$ and $R_{0,2}$ of system
		\eqref{eq:two-nodes-network}.
		A variation of $\varepsilon_S$ or $\varepsilon_C$ can induce a remarkable change in the values of $R_{0,1}$ and $R_{0,2}$ [(a), (c)].
		At the opposite, $R_{0,1}$ and $R_{0,2}$ seem to be robust to a variation of $\varepsilon_I$ or $\varepsilon_A$ [(b), (d)].}
	\label{fig:influence-coupling-strengths-R01-R02}
\end{figure}

\begin{rema}
	After tedious symbolical computations, it is possible to obtain the expressions of the basic reproduction numbers in the case
	of a symmetric two-nodes network (see Figure~\ref{fig:symmetric-two-nodes-network}).
	However, the output is unreadable, even with relevant simplifications of the parameters of the system.
	
	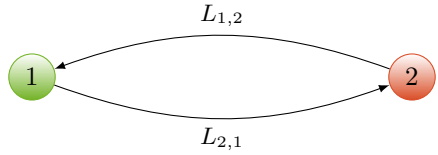
\begin{figure}[h!]
		\centering
		\tikzstyle{greencircle} = [circle,
		draw,
		green!40!brown,
		bottom color=green!40!brown,
		top color= white,
		text=black]
		\tikzstyle{redcircle} = [circle,
		draw,
		red!40!brown,
		bottom color=red!40!brown,
		top color= white,	
		text=black]
		\begin{tikzpicture}
		\node[greencircle] (x1) at (-2.5,0) {$1$};
		\node[redcircle] (x2) at (2.5,0) {$2$};
		\draw[>=latex,->, bend right=20] (x1) to  node [pos=0.5,below]{\small $L_{2,1}$}(x2);
		\draw[>=latex,->, bend right=20] (x2) to  node [pos=0.5,above]{\small $L_{1,2}$}(x1);
		\end{tikzpicture}
		\caption{Symmetric two-nodes network.}
		\label{fig:symmetric-two-nodes-network}
	\end{figure}
	
	In the same manner, it is possible to obtain the complete symbolical expression of the disease-free equilibrium $\Sigma^0$
	of a three-nodes chain (see Figure \ref{fig:three-nodes-chain}):
	\begin{equation}
	\label{eq:DFE:3nodes}
	\Sigma^0 = \big(S^1_0,\, I^1_0,\, C^1_0,\, A^1_0,\, S^2_0,\, I^2_0,\, C^2_0,\, A^2_0,\, S^3_0,\, I^3_0,\, C^3_0,\, A^3_0 \big)
	\end{equation}
	where
	\[
	\begin{split}
	&S^1_0 = \frac{ \Lambda_1}{ L_{2,1}   \varepsilon_S +\mu_1},\\
	&S^2_0 = \frac { L_{2,1} \varepsilon_S (\Lambda_1 + \Lambda_2)
		+\Lambda_2\,\mu_1}{ \varepsilon_S(L_{2,1} L_{3,2}\,\varepsilon_S + L_{2,1}
		\mu_2 + L_{3,2} \mu_1) + \mu_1 \mu_2 },\\
	&S^3_0 = \frac { L_{2,1} L_{3,2} \varepsilon_S^2 (\Lambda_1 +
		\Lambda_2 + \Lambda_3 ) + \varepsilon_S \big[\Lambda_3 L_{2,1} \mu_2 + L_{3,2} \mu_1 (\Lambda_2 + \Lambda_3) \big]
		+\Lambda_3 \mu_1 \mu_2}{\mu_3 \big[ \varepsilon_S (L_{2,1}L_{3,2} \varepsilon_S + L_{2,1} \mu_2
		+ L_{3,2} \mu_1) +\mu_1 \mu_2 \big] },\\
	& I^1_0 = I^2_0 = I^3_0 = C^1_0 = C^2_0 = C^3_0 = A^1_0 = A^2_0 = A^3_0 = 0.
	\end{split}
	\]
	Similarly, the global basic reproduction number of a three-nodes chain
	reveals that the couplings coefficients $L_{2,1}$ and $L_{3,2}$ affect the dynamics of each node
	of the network,
	and are likely to produce undesirable phenomenon.
	But its nebulous expression seems to forbid
	any relevant interpretation.
	However, we are going to see in the next subsection, that the existence of a unique stable disease-free equilibrium
	for the network is guaranteed
	under reasonable assumption.
\end{rema}

\begin{figure}[h!]
	\centering
	\tikzstyle{greencircle} = [circle,
	draw,
	green!40!brown,
	bottom color=green!40!brown,
	top color= white,
	text=black]
	\tikzstyle{redcircle} = [circle,
	draw,
	red!40!brown,
	bottom color=red!40!brown,
	top color= white,	
	text=black]
	\tikzstyle{bluecircle} = [circle,
	draw,
	blue!40!brown,
	bottom color=blue!40!brown,
	top color= white,	
	text=black]
	\begin{tikzpicture}
	\node[greencircle] (x1) at (-2.5,0) {$1$};
	\node[redcircle] (x2) at (2.5,0) {$2$};
	\node[bluecircle] (x3) at (7.5,0) {$3$};
	\draw[>=latex,->] (x1) to  node [pos=0.5,above]{\small $L_{2,1}$}(x2);
	\draw[>=latex,->] (x2) to  node [pos=0.5,above]{\small $L_{3,2}$}(x3);
	\end{tikzpicture}
	\caption{Three-nodes chain built with nonidentical instances of system \eqref{eq:SICA-model}.}
	\label{fig:three-nodes-chain}
\end{figure}
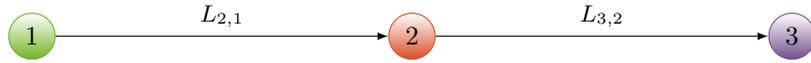

\subsection{Disease-Free Equilibrium of the complex network}

Here, our aim is to overcome the computational difficulties met in the previous subsections.
Thus we establish in the general case that the complex network \eqref{eq:SICA-network} admits
a unique stable disease-free equilibrium under reasonable assumption.

\begin{theo}
	The complex network \eqref{eq:SICA-network} admits a unique disease-free equilibrium $\Sigma_0$,
	which is globally asymptotically stable in the region $\Omega$ defined by \eqref{eq:region-Omega},
	provided 
	\begin{equation}
	\dfrac{\Lambda_0}{\mu_0} \dfrac{\mathcal{N}_i}{\mathcal{D}_i} < 1,
	\label{eq:assumption-unique-DFE-network-problem}
	\end{equation}
	for all $i \in \lbrace 1,\,\dots,\,n \rbrace$, where $\mathcal{N}_i$ and $\mathcal{D}_i$ are defined by
	\[
	\begin{split}
	&\mathcal{N}_i = \beta_i k_{1,i},\\
	&\mathcal{D}_i = \mu_i \big[\xi_{2,i}(\rho_i + \xi_{1,i}) + \phi_i\xi_{1,i} + \rho_id_i \big] + \rho_i \omega_i d_i,
	\end{split}
	\]
	with
	\begin{equation}
	\begin{split}
	&k_{1,i} = \xi_{1,i} \xi_{2,i} + \xi_{1,i}\phi_i\eta_{C,i} + \xi_{2,i} \rho_i \eta_{A,i},\\
	&k_{2,i} = \xi_{1,i} \omega_i + \xi_{1,i}\xi_{3,i} \eta_{C,i} + \rho_i \eta_{A,i}\omega_i - \eta_{C,i}\rho_i\alpha_i,\\
	&k_{3,i} = \alpha_i \xi_{2,i} + \xi_{2,i} \xi_{3,i} \eta_{A,i} + \phi_i\eta_{C,i} \alpha_i - pji_i \eta_{A,i} \omega_i,\\
	&\xi_{1,i} = \alpha_i + \mu_i + d_i,	\\
	&\xi_{2,i} = \omega_i + \mu_i,	\\
	&\xi_{3,i} = \rho_i + \phi_i + \mu_i.
	\end{split}
	\label{eq:coeff-ki-xii}
	\end{equation}
	\label{theo:global-dfe-network}
\end{theo}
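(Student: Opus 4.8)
The plan is to treat the statement as two claims---existence and uniqueness of $\Sigma_0$, and global attractivity in $\Omega$---the second being reduced to the Hurwitz stability of a single cooperative comparison matrix. First I would locate the DFE: imposing $I_j=C_j=A_j=0$ for every $j$ annihilates all infection and transfer terms, leaving only the susceptible equations $\mu_j S_j^0-\varepsilon_S\sum_k L_{j,k}S_k^0=\Lambda_j$, i.e. $MS^0=\Lambda$ with $M=\mathrm{diag}(\mu_1,\dots,\mu_n)-\varepsilon_S L$ and $\Lambda=(\Lambda_1,\dots,\Lambda_n)^T$. Since $L$ has nonnegative off-diagonal entries and zero column sums, $M$ is a $Z$-matrix whose $j$-th column sum equals $\mu_j>0$; it is therefore strictly column diagonally dominant, hence a nonsingular $M$-matrix with $M^{-1}\ge 0$. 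This yields a unique $S^0=M^{-1}\Lambda>0$ and shows that $\Sigma_0=(S^0,0,0,0)$ is the unique disease-free equilibrium.

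For global stability I would work inside $\Omega$, where Theorem~\ref{theo:boundedness-solutions} confines the trajectories and in particular forces $0\le S_j(t)\le \Lambda_0/\mu_0$. Grouping the infected coordinates into $Y=(I_1,C_1,A_1,\dots,I_n,C_n,A_n)^T\ge 0$ and bounding each infection term by $\beta_j(I_j+\eta_{C,j}C_j+\eta_{A,j}A_j)\Lambda_0/\mu_0$, the infected subsystem obeys the differential inequality $\dot Y\le \mathcal{M}Y$, where $\mathcal{M}=\mathrm{blockdiag}(B_1,\dots,B_n)+L\otimes\mathrm{diag}(\varepsilon_I,\varepsilon_C,\varepsilon_A)$ and $B_i=F_i-V_i$ is the local infected block at $S_i=\Lambda_0/\mu_0$, with $F_i$ the new-infection matrix and $V_i$ the transfer matrix of node $i$. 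All off-diagonal entries of $\mathcal{M}$ (the internal transfers $\omega_i,\alpha_i,\phi_i,\rho_i$, the infection terms, and the migration weights $\varepsilon_\bullet L_{j,k}$, $j\neq k$) are nonnegative, so $\mathcal{M}$ is Metzler, the comparison system $\dot Z=\mathcal{M}Z$ is cooperative, and $0\le Y(t)\le Z(t)$ when $Y(0)=Z(0)$ (see \cite{lakshmikantham1989stability}). It then suffices to prove that $\mathcal{M}$ is Hurwitz.

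This is exactly where hypothesis \eqref{eq:assumption-unique-DFE-network-problem} acts, at the level of the diagonal blocks. Following \cite{driessche2002reproduction}, $-V_i$ is a nonsingular $M$-matrix and $B_i=F_i-V_i$ is Hurwitz if and only if $\rho(F_iV_i^{-1})<1$; computing this local next-generation radius (exactly as in Proposition~\ref{prop:R0:2nodes}) gives $\rho(F_iV_i^{-1})=(\Lambda_0/\mu_0)\beta_i k_{1,i}/\mathcal{D}_i=(\Lambda_0/\mu_0)\mathcal{N}_i/\mathcal{D}_i$, so \eqref{eq:assumption-unique-DFE-network-problem} says precisely that every $B_i$ is Hurwitz. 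The coefficients $k_{1,i},k_{2,i},k_{3,i}$ of \eqref{eq:coeff-ki-xii} are the matching left certificates: the row vector $(k_{1,i},k_{2,i},k_{3,i})$ equals $\mathcal{D}_i(1,\eta_{C,i},\eta_{A,i})V_i^{-1}$, i.e. $\mathcal{D}_i$ times the Lyapunov weight built from the left Perron eigenvector $(1,\eta_{C,i},\eta_{A,i})$ of $V_i^{-1}F_i$, and along the uncoupled block it yields the sign estimate $\tfrac{d}{dt}(k_{1,i}I_i+k_{2,i}C_i+k_{3,i}A_i)=\mathcal{D}_i((\Lambda_0/\mu_0)\mathcal{N}_i/\mathcal{D}_i-1)(I_i+\eta_{C,i}C_i+\eta_{A,i}A_i)\le 0$.

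The hard part will be upgrading the stability of the isolated blocks $B_i$ to that of the coupled matrix $\mathcal{M}$, since the migration term is not a small perturbation and the weights $k_{\bullet,i}$ vary with $i$, so that the coupling contributions $\varepsilon_\bullet\sum_k(\cdot)_k\sum_{j\neq k}(k_{\bullet,j}-k_{\bullet,k})L_{j,k}$ fail to cancel in a naive summation. The structural fact I would exploit is again the zero column sums of $L$: writing $\mathcal{M}=\mathcal{F}-\mathcal{V}$ with $\mathcal{F}=\mathrm{blockdiag}(F_i)\ge 0$ and $\mathcal{V}=\mathrm{blockdiag}(V_i)-L\otimes\mathrm{diag}(\varepsilon_I,\varepsilon_C,\varepsilon_A)$, the vanishing column sums make $\mathcal{V}$ a $Z$-matrix whose column sums coincide with those of $\mathrm{blockdiag}(V_i)$, namely $\mu_i,\mu_i,\mu_i+d_i>0$; hence $\mathcal{V}$ is a nonsingular $M$-matrix, $\mathcal{V}^{-1}\ge 0$, and $\mathcal{M}$ is Hurwitz iff $\rho(\mathcal{F}\mathcal{V}^{-1})<1$. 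The crux is then to show that this migration-mixed reproduction number stays below $\max_i\rho(F_iV_i^{-1})=\max_i(\Lambda_0/\mu_0)\mathcal{N}_i/\mathcal{D}_i<1$, the conservative, mass-preserving character of the coupling---once more encoded in the vanishing column sums---being what forbids the mixing from pushing the spectral radius past the worst node's value. Once $\mathcal{M}$ is Hurwitz, $Z(t)\to 0$ forces $Y(t)\to 0$ by comparison; substituting $I_j,C_j,A_j\to 0$ into the susceptible block leaves the asymptotically autonomous linear system $\dot S\to\Lambda-MS$ with $-M$ Hurwitz, whence $S(t)\to S^0$ and $\Sigma_0$ is globally asymptotically stable in $\Omega$.
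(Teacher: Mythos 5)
Your first paragraph is correct and coincides in substance with the paper's own argument: setting $I_j=C_j=A_j=0$ reduces the equilibrium conditions to the linear system $\big(\mathrm{diag}(\mu_1,\dots,\mu_n)-\varepsilon_S L\big)S^0=\Lambda$, and the zero column sums of $L$ make this matrix strictly diagonally dominant by columns, hence invertible (the paper cites the Levy--Desplanques theorem where you invoke $M$-matrix theory). The gap is in the stability part, and it sits exactly where you wrote ``the crux''. You reduce the theorem to the claim that the coupled Metzler matrix $\mathcal{M}=\mathrm{blockdiag}(B_i)+L\otimes\mathrm{diag}(\varepsilon_I,\varepsilon_C,\varepsilon_A)$ is Hurwitz, i.e.\ $\rho(\mathcal{F}\mathcal{V}^{-1})<1$, and you justify this only by the principle that a mass-conserving (zero-column-sum) coupling cannot push the spectral radius above $\max_i\rho(F_iV_i^{-1})$. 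That principle is false for general nonidentical Metzler blocks. Take
\[
B_1=\begin{pmatrix}-1&10\\0&-1\end{pmatrix},\qquad
B_2=\begin{pmatrix}-1&0\\10&-1\end{pmatrix},
\]
both Metzler and Hurwitz, writable as $F_i-V_i$ with $F_i\geq 0$, $V_i=I$, so that $\rho(F_iV_i^{-1})=0$; couple them compartment-wise by the symmetric exchange $L\otimes I_2$ with $L=c\left(\begin{smallmatrix}-1&1\\1&-1\end{smallmatrix}\right)$, which has zero column sums. For large $c$ the coupled matrix has an eigenvalue approaching the spectral abscissa of $\tfrac{1}{2}(B_1+B_2)=\left(\begin{smallmatrix}-1&5\\5&-1\end{smallmatrix}\right)$, namely $4>0$, so the coupled system is unstable and $\rho(\mathcal{F}\mathcal{V}^{-1})\geq 1$ even though every local radius is $0$. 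Hence conservativity of the coupling alone cannot close your argument; any completion must exploit the specific structure of the SICA infected blocks, and that is precisely the content you have left unproven.

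For comparison, the paper never forms a coupled next-generation matrix. It uses the weighted Lyapunov functional $V=\sum_{i=1}^n\big(k_{1,i}I_i+k_{2,i}C_i+k_{3,i}A_i\big)$ with the node-dependent weights \eqref{eq:coeff-ki-xii} (the same weights you correctly identified as $\mathcal{D}_i(1,\eta_{C,i},\eta_{A,i})V_i^{-1}$), computes $\dot V$ along solutions in $\Omega$, bounds $S_i\leq\Lambda_0/\mu_0$, and invokes hypothesis \eqref{eq:assumption-unique-DFE-network-problem} to get $\dot V\leq 0$, concluding with LaSalle's invariance principle since $\dot V=0$ forces $I_i=C_i=A_i=0$. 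Note that even on that route the migration terms are the delicate point: summing $\varepsilon_I\sum_k L_{i,k}I_k$ (and its analogues) against node-dependent weights leaves the contribution $\varepsilon_I\sum_k\big(\sum_{i\neq k}(k_{1,i}-k_{1,k})L_{i,k}\big)I_k$, which cancels exactly when the weights are node-independent, by the zero column sums of $L$; the displayed equality for $\dot V$ in the paper tacitly relies on this cancellation. In other words, the obstruction your spectral argument runs into is the same one the Lyapunov computation must absorb; the difference is that your proposal defers it entirely to a spectral inequality that, as a general matrix statement, is false.
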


\begin{proof}
	The equilibrium points of the network problem \eqref{eq:SICA-network} are the solutions of the system 
	\[
	\dot{S}_j = \dot{I}_j = \dot{C}_j = \dot{A}_j = 0, \quad 1 \leq j \leq n.
	\]
	We determine the disease-free equilibria by assuming that $I_j = C_j = A_j = 0$ for all $j \in \lbrace 1,\, \dots,\, n \rbrace$.
	Then we directly obtain
	\[
	\dot{I}_j = \dot{C}_j = \dot{A}_j = 0, \quad 1 \leq j \leq n,
	\]
	and simultaneously
	\begin{equation}
	\begin{cases}
	\mu_1 S_1 - \varepsilon_S \displaystyle\sum_{k=1}^n L_{1,k}S_k = \Lambda_1,\\
	\mu_2 S_2 - \varepsilon_S \displaystyle\sum_{k=1}^n L_{2,k}S_k = \Lambda_2,\\
	\dots\\
	\mu_n S_n - \varepsilon_S \displaystyle\sum_{k=1}^n L_{n,k}S_k = \Lambda_n.
	\end{cases}
	\label{eq:linear-system-DFE-SICA-network}
	\end{equation}
	The latter system is a linear system which can be written
	\[
	B Y = \Lambda, \quad Y = \big(S_1,\,\dots,\,S_n \big)^T, \quad
	\Lambda = \big( \Lambda_1,\,\dots,\,\Lambda_n \big)^T,
	\]
	with $B = B_1 - \varepsilon_S L$, $L$ being the matrix of connectivity defined as in Section~\ref{section:Construction-of-the-complex-network},
	and $B_1$ is a diagonal matrix storing the mortality rates,
	that is $B_1 = \diag{\mu_1,\, \dots,\,\mu_n}$.
	$L$ being a zero column-sum matrix, it follows that $B$ is a strictly diagonally dominant matrix.
	By virtue of Levy-Desplanques Theorem \cite{horn2012matrix}, $B$ is an invertible matrix.
	Hence, system \eqref{eq:linear-system-DFE-SICA-network} admits a unique solution,
	which corresponds to the unique disease-free equilibrium $\Sigma_0$ of the network problem \eqref{eq:SICA-network}.
	
	Next, we introduce the Lyapunov functional $V$ defined by
	\[
	V = \sum_{i=1}^n V_i,
	\]
	where $V_i$ is the Lyapunov function introduced in \cite{silva2017global} (proof of Theorem 1), given
	by
	\[
	V_i = k_{1,i} I_i + k_{2,i} C_i + k_{3,i} A_i, \quad 1 \leq i \leq n,
	\]
	where the coefficients $k_{1,i}$, $k_{2,i}$ and $k_{3,i}$ are determined by \eqref{eq:coeff-ki-xii}.
	We compute the orbital derivative $\dot{V}$
	of the Lyapunov functional $V$ along a solution $X$ starting in $\Omega$ :
	\[
	\begin{split}
	\dot{V}
	&=	\displaystyle\sum_{i=1}^n \big(k_{1,i} \dot{I}_i + k_{2,i} \dot{C}_i + k_{3,i} \dot{A}_i \big) \\
	&= \sum_{i=1}^n \Big[\big(\mathcal{N}_i I_i S_i - \mathcal{D}_i I_i \big)
	+ \eta_{C,i}\big(\mathcal{N}_i C_i S_i - \mathcal{D}_i C_i \big)
	+ \eta_{A,i}\big(\mathcal{N}_i A_i S_i - \mathcal{D}_i A_i \big)\Big]\\
	&\leq \sum_{i=1}^n \left[\left(\mathcal{N}_i\frac{\Lambda_0}{\mu_0}-\mathcal{D}_i\right)I_i
	+ \eta_{C_i}\left(\mathcal{N}_i\frac{\Lambda_0}{\mu_0}-\mathcal{D}_i\right)C_i
	+ \eta_{A_i}\left(\mathcal{N}_i\frac{\Lambda_0}{\mu_0}-\mathcal{D}_i\right)A_i \right],
	\end{split}
	\]
	which guarantees that $\dot{V} \leq 0$, since we assume that
	$\frac{\Lambda_0}{\mu_0} \frac{\mathcal{N}_i}{\mathcal{D}_i} < 1$ for all $i \in \lbrace 1,\,\dots,\,n \rbrace$.
	
	Finally, it is seen that $\dot{V} = 0$ if and only if $I_i = C_i = A_i = 0$ for all $i \in \lbrace 1,\,\dots,\,n \rbrace$.
	The conclusion follows from LaSalle invariance principle \cite{lasalle1960some}.
\end{proof}

\begin{rema}
	Since we have $\Lambda_i \leq \Lambda_0$ and $\mu_i \geq \mu_0$ for all $i \in \lbrace 1,\,\dots,\,n \rbrace$,
	assumption \eqref{eq:assumption-unique-DFE-network-problem} implies that
	\[
	\dfrac{\Lambda_i}{\mu_i} \dfrac{\mathcal{N}_i}{\mathcal{D}_i} < 1,
	\quad 1 \leq i \leq n.
	\]
	As it is relevant to introduce again $R_{0,i} = \dfrac{\Lambda_i}{\mu_i} \dfrac{\mathcal{N}_i}{\mathcal{D}_i}$
	for each $i \in \lbrace 1,\,\dots,\,n \rbrace$,
	it is seen that assumption \eqref{eq:assumption-unique-DFE-network-problem} is a sufficient condition for the existence
	of a unique stable disease-free equilibrium in the network, which requires that every node in the network has 
	a ``small'' basic reproduction number $R_{0,i}$. If only one node violates this condition, then the network is likely to exhibit
	undesirable equilibrium states. In other words, Theorem~\ref{theo:global-dfe-network} generalizes the pattern
	discovered with a two-nodes network in Proposition~\ref{prop:R0:2nodes}.
\end{rema}

\section{A case study: Cape Verde archipelago}
\label{sec:casestudyCV}

In this section, we study the case of Cape Verde archipelago,
which has been affected by HIV/AIDS epidemics for several decades.
Our aim is to determine a topology which could temper the spreading of the epidemics.

\subsection{Geographical background}

Cape Verde is an archipelago of $10$ volcanic islands, located 
in the Atlantic Ocean, at about 570 kilometers from the Northwest African coast.
Since $1$ of those $10$ islands has no inhabitants,
we propose to model this archipelago with a $9$ nodes network
(see Figure~\ref{fig:cape-verde-archipelago} below).
We assume that the network is divided into $3$ groups of nodes:
group $1$ is composed with nodes $1$, $2$, $3$, $4$, $5$,
group $2$ with nodes $6$, $7$, $8$,
and group $3$ with single node $9$, corresponding to Santiago island which
is the most important island in the archipelago, with the greatest number of HIV infected inhabitants.
The parameters values are given in Table~\ref{tab:parameters-values-cape-verde}.
In absence of coupling, it is relevant to compute the basic reproduction number $R_0$ for each group: 
$R_0 \simeq 0.914$ for group $1$,
$R_0 \simeq 1.371$ for group $2$ and
$R_0 \simeq 7.312$ for group $3$.

\begin{rema}
	The value of the basic reproduction number for group $3$ implies that assumption \eqref{eq:assumption-unique-DFE-network-problem}
	of Theorem~\ref{theo:global-dfe-network}
	may not be fulfilled, which could lead to the emergence of undesirable equilibrium states, with a persistence of the infection
	within the population for instance. Thus it appears crucial to limit the spreading of the infection
	at a reasonable level,
	by finding a suitable topology of the network.
\end{rema}

The coupling strengths are fixed as follows:
\[
\varepsilon_S = 0.02, \quad
\varepsilon_I = 0.01, \quad
\varepsilon_C = 0.01, \quad
\varepsilon_A = 0.01,
\]
in the case of weak coupling, or
\[
\varepsilon_S = 0.2, \quad
\varepsilon_I = 0.3, \quad
\varepsilon_C = 0.1, \quad
\varepsilon_A = 0.3,
\]
in the case of strong coupling.
The initial condition $X_0$ partially corresponds to official data:
approximate values of the total population $N_j(0)$ for each node ($1 \leq j \leq 9$)
in $2015$ can be found in \cite{RapportCapeVerde},
as well as approximate values of infected individuals $I_j(0)$.
The values of $C_j(0)$ and $A_j(0)$ have been assumed, so that
the corresponding subpopulations are in proportionality with the total population.

\begin{figure}[h!]
	\centering
	\includegraphics[scale=1]{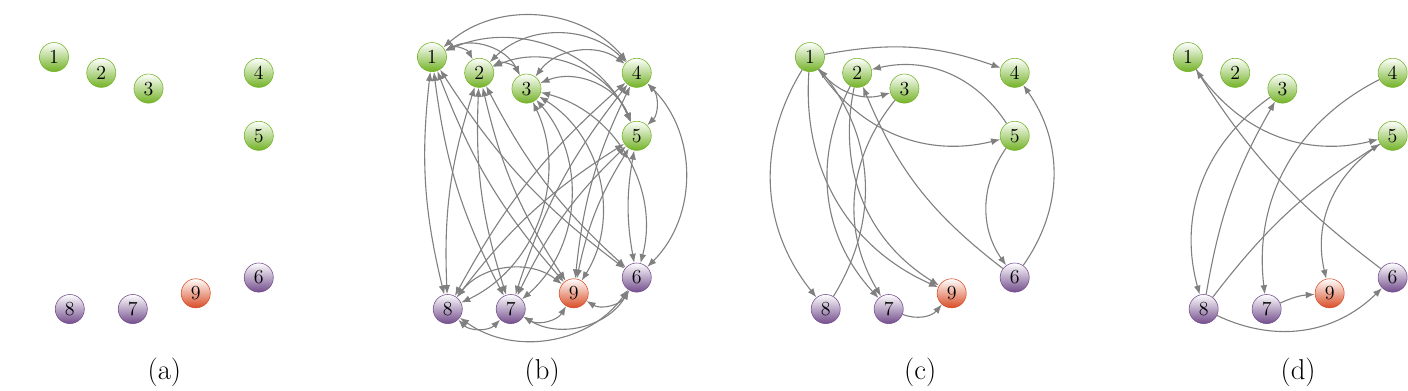}
	\caption{Cape Verde archipelago: four distinct topology sets.
		(a) Empty topology corresponding to a network without couplings.
		(b) Complete graph topology.
		(c) and (d) Weakly dense topologies.}
	\label{fig:cape-verde-archipelago}
\end{figure}

\begin{table}[h!]
	\caption{Cape Verde archipelago modeled by a $9$ nodes complex network.
		Official $2015$ data \cite{RapportCapeVerde} are marked with a star. Other numerical data have been chosen arbitrarily.}
	\medskip
	\centering
	\begin{tabular}{l l l l l l l l}
		\hline\hline
		\textbf{Island}	&	\textbf{Node} 	&	$N(0)$		&	$S(0)$	&	$I(0)$		&	$C(0)$	&	$A(0)$	&	$R_0$	\\
		\hline                                        
		Santo Ant\~ao	&	$1$				&	$40500^*$	&	40388	&	$10^*$		&	93		&	9		&	$0.914$	\\
		S\~ao Vicente	&	$2$				&	$81000^*$	&	80763	&	$32^*$		&	186		&	19		&	$0.914$	\\
		Sau Nicolau		&	$3$				&	$12420^*$	&	12381	&	$7^*$		&	29		&	3		&	$0.914$	\\
		Sal				&	$4$				&	$33750^*$	&	33642	&	$22^*$		&	78		&	8		&	$0.914$	\\
		B\~oa Vista		&	$5$				&	$14450^*$	&	14404	&	$10^*$		&	33		&	3		&	$0.914$	\\
		Maio			&	$6$				&	$6980^*$	&	6957	&	$5^*$		&	16		&	2		&	$1.371$	\\
		Fogo			&	$7$				&	$35840^*$	&	35735	&	$15^*$		&	82		&	8		&	$1.371$	\\
		Brava			&	$8$				&	$5700^*$	&	5681	&	$5^*$		&	13		&	1		&	$1.371$	\\
		Santiago		&	$9$				&	$394130^*$	&	293084	&	$303^*$		&	676		&	67		&	$7.312$ \\
		\hline\hline
	\end{tabular}
	\label{tab:islands-cape-verde}
\end{table}

\begin{table}[h!]
	\caption{Parameters values for the numerical simulations of Cape Verde archipelago.}
	\medskip
	\centering
	\begin{tabular}{c r r r}
		\hline\hline 
		\textbf{Parameter}	&	\textbf{Nodes 1, 2, 3, 4, 5}	&	\textbf{Nodes 6, 7, 8}	&	\textbf{Node 9}		\\
		\hline                                                                                                        
		$\Lambda_i$			&	$2$								&	$2$		  				&	$2$		  	 		\\
		$\beta_i$			&	$0.001$							&	$0.0015$ 		 		&	$0.008$	 		 	\\
		$\eta_{C,i}$		&	$0.04$							&	$0.04$	 		 		&	$0.04$	  	   		\\
		$\eta_{A,i}$		&	$1.3$							&	$1.3$	 				&	$1.3$	  	  		\\
		$\mu_i$				&	$1/70$							&	$1/70$	 		 		&	$1/70$	  	   		\\
		$\rho_i$			&	$0.1$							&	$0.1$	 				&	$0.1$	  	  		\\
		$\phi_i$			&	$1$								&	$1$		 				&	$1$		  	 		\\
		$\omega_i$			&	$0.09$							&	$0.09$	 				&	$0.09$	  	   		\\
		$\alpha_i$			&	$0.33$							&	$0.33$	 				&	$0.33$	  	   		\\
		$d_i$				&	$1$       						&	$1$       				&	$1$ 				\\
		\hline\hline            	       	
	\end{tabular}
	\label{tab:parameters-values-cape-verde}
\end{table}

\subsection{Randomly generated topologies}

The numerical integration on a finite time interval $[0,\,T]$
of the complex network \eqref{eq:SICA-network} modeling Cape Verde archipelago
has been performed using the \texttt{python} language, in a GNU/LINUX environment.
For each set of parameters, let us introduce the final level of infected individuals, given by
\begin{equation}
L_f = \displaystyle\sum_{j=1}^n \big[I_j(T) + C_j(T) + A_j(T) \big].
\label{eq:final-level-infected-individuals}
\end{equation}
In absence of coupling (see Figure~\ref{fig:cape-verde-archipelago}(a)), we obtain
$L_f \simeq 9112.77$ with $T=200$,
whereas the complete graph topology (see Figure~\ref{fig:cape-verde-archipelago}(b)) leads to $L_f \simeq 9161.02$.
Since the couplings are likely to produce emerging equilibria, we propose to explore the possible topologies for the complex
network modeling Cape Verde archipelago.
The set of possible topologies being finite, there obviously exists an optimal topology
minimizing the level of infection $L_f$.
Thus our goal is to determine a near-optimal topology.
However, it is easily seen that a $9$ nodes network can admit at most $72$ edges, assuming
that there are no loops nor parallel edges.
The total number of possible topologies is given by the sum of binomial coefficients
\[
\displaystyle\sum_{k=1}^{72}\binom{72}{k} \simeq 4.72.10^{21},
\]
thus it is not reasonable to explore the total set of topologies.
We propose to investigate a sample of randomly generated topologies,
by choosing a random number of edges $1 \leq \abs{\mathscr{E}} \leq 72$,
and a random subset of $\abs{\mathscr{E}}$ edges.
We have computed the final level $L_f$ of infected individuals for a sample of $1400$ randomly
generated topologies.
The result is depicted in Figure~\ref{fig:capo-verde-random-simulation}, where each
red cross has coordinates $\left(L_f,\,\abs{\mathscr{E}}\right)$. The green dotted vertical line
corresponds to the level of infected individuals for an empty topology.

\begin{figure}
	\centering
	\includegraphics[scale=0.85]{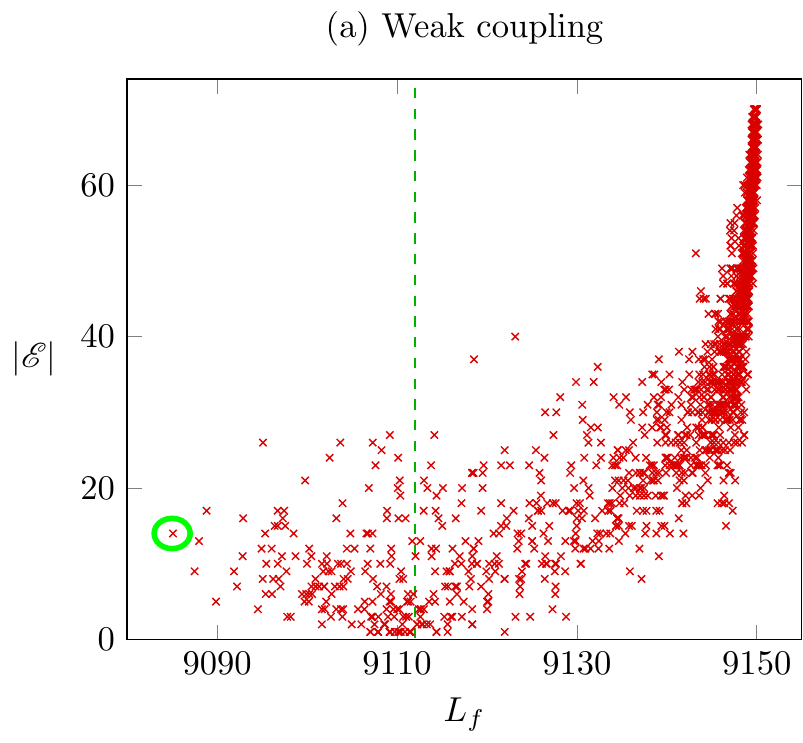}
	\includegraphics[scale=0.85]{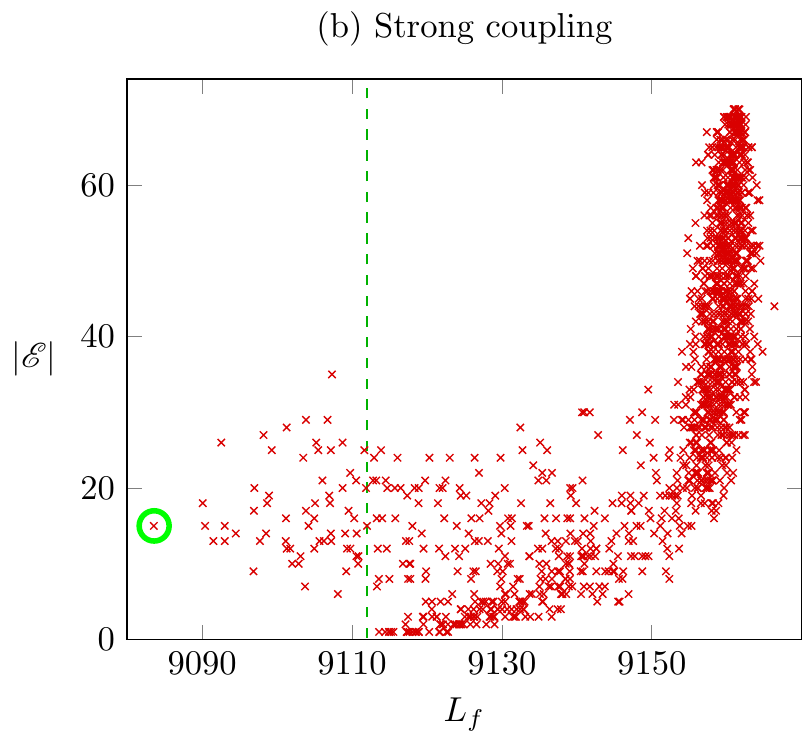}
	\caption{Numerical results for two samples of $1400$ randomly generated topologies modeling Cape Verde (9 islands). 
		The green dotted vertical line of equation $x=9113$ shows the level of infected individuals without coupling.
		The optimal topology is marked with a green circle.
		(a) Weak coupling: $\varepsilon_S = 0.02$,
		$\varepsilon_I = \varepsilon_C = \varepsilon_A = 0.01$.
		(b) Strong coupling: $\varepsilon_S = 0.2$,
		$\varepsilon_I = \varepsilon_A = 0.3$, $\varepsilon_C = 0.1$.}
	\label{fig:capo-verde-random-simulation}
\end{figure}

We observe that the final level of infected individuals $L_f$
varies a lot with respect to the number $\abs{\mathscr{E}}$ of edges.
It seems that a dense topology, with a number of edges neighbor to the maximal number $72$,
corresponding to the complete graph topology, produces a high level of infection.
Meanwhile, a weakly dense topology is not a warranty for a low final level of infection.
However, this random simulation has detected an optimal topology (marked with a green circle in Figure~\ref{fig:capo-verde-random-simulation})
for which the final level of infection is lesser than the benchmark $L_f \simeq 9113$ obtained for an empty topology.
Furthermore, we observe that the two clouds of points obtained for weak or strong coupling roughly admit similar shapes.
In other words, the topology seems to be more important than the coupling strength.

\subsection{Weakly dense topologies}

The random simulation presented in the previous section seems to exclude dense topologies.
The question of how to select a weakly dense topology, in order to temper the final level of infected
individuals $L_f$ remains delicate.
Finally, we present the times series corresponding to two weakly dense topologies.

The first weakly dense topology we aim to analyze is a near-optimal topology
detected by the random simulation (see Figure~\ref{fig:cape-verde-archipelago}(c)); it admits a set of $14$ edges, given by
\[
\mathscr{E} = \Big\lbrace
[1, 3], [2, 9], [5, 6], [3, 7], [7, 9], [2, 7], [1, 9], [6, 2], [6, 4], [2, 8], [5, 2], [8, 1], [1, 5], [1, 4]
\Big\rbrace.
\]
The time series of the corresponding complex network are shown in Figure~\ref{fig:capo-verde-topology-c}.
The final level of infected individuals for that optimal topology is $L_f \simeq 9085.09$.

\begin{figure}[h!]
	\centering
	\includegraphics[scale=0.72]{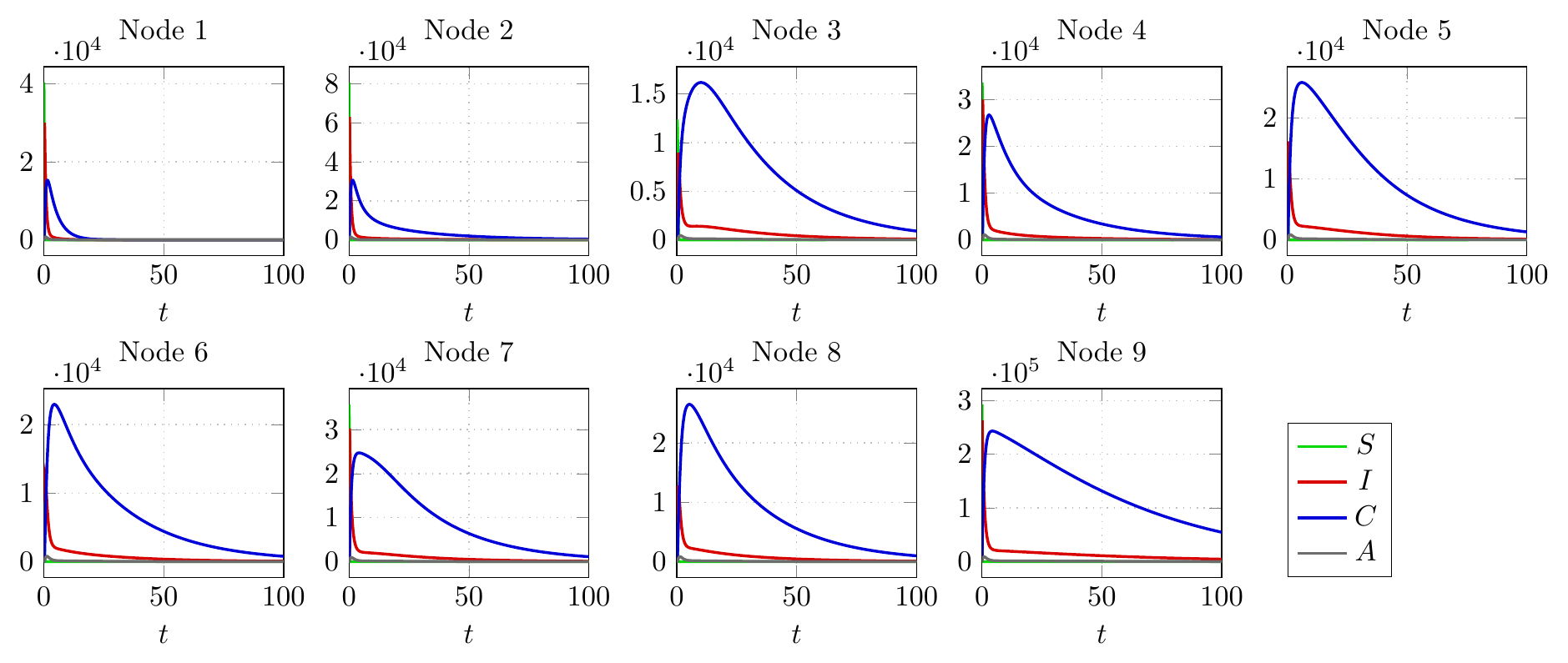}
	\caption{Numerical results for topology (c).}
	\label{fig:capo-verde-topology-c}
\end{figure}

The second weakly dense topology (see Figure~\ref{fig:cape-verde-archipelago}(d)) we focus on
is another near-optimal topology; it admits a set of $9$ edges is given by
\[
\mathscr{E} = \Big\lbrace
[1, 5], [8, 6], [5, 9], [8, 5], [4, 7], [3, 8], [6, 1], [8, 3], [7, 9]
\Big\rbrace.
\]
The time series of the corresponding complex network are shown in Figure~\ref{fig:capo-verde-topology-d}.
The final level of infected individuals for that second weakly dense topology is $L_f \simeq 9087.50$.

\begin{figure}[h!]
	\centering
	\includegraphics[scale=0.72]{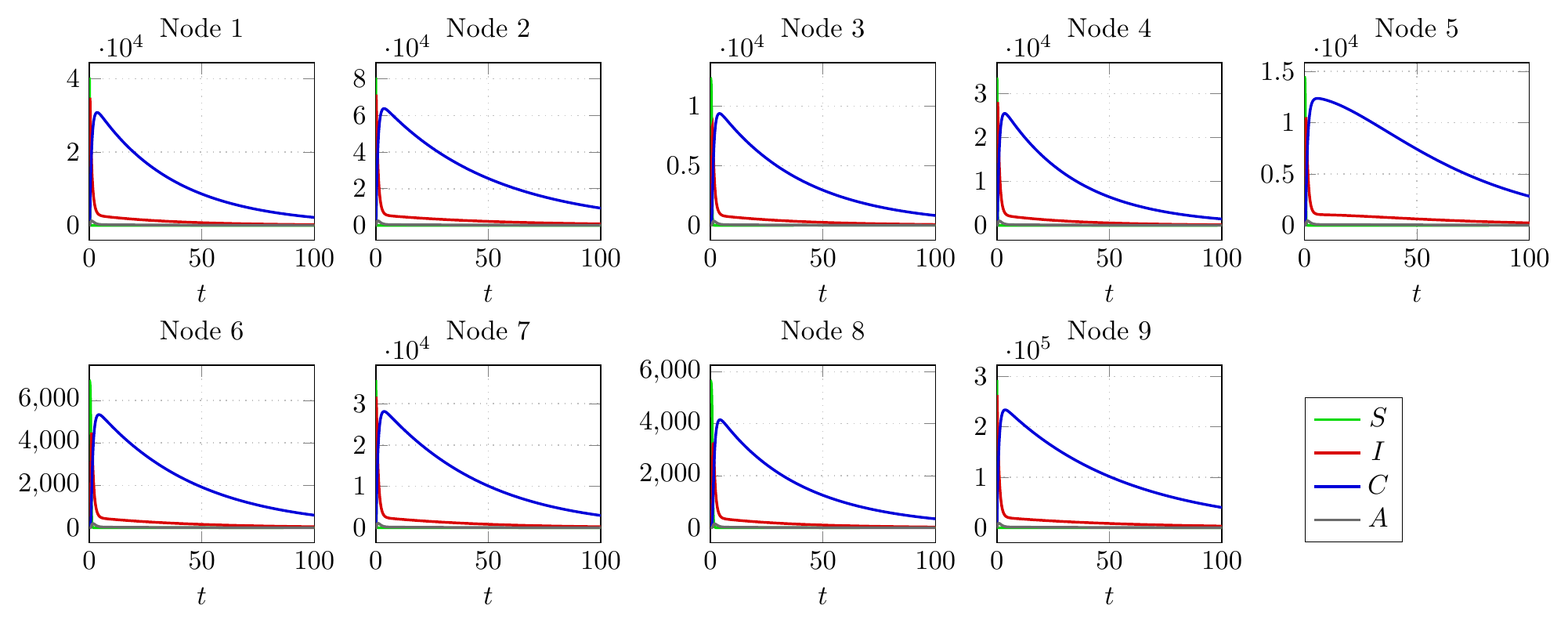}
	\caption{Numerical results for topology (d).}
	\label{fig:capo-verde-topology-d}
\end{figure}

\begin{rema}
	The numerical results presented in this section can help finding a favorable situation for limiting the level of infection
	in Cape Verde archipelago. Indeed, the results of the simulation of randomly generated topologies appear to exclude dense
	topologies, which means that one should avoid important human migrations from one island to another.
	In the mean time, weakly dense topologies (c) and (d) presented in Figure \ref{fig:cape-verde-archipelago}
	seem to favor migrations stemming from islands admitting a small basic reproduction number.
	Nevertheless, those interpretations should be prudently nuanced, since they are the result of a mathematical model
	whose scope is necessarily limited.
\end{rema}

\section{Conclusion}
\label{sec:concl}
In this work, we presented the analysis of a complex network of dynamical systems for the study of the spread of HIV/AIDS epidemics.
Built with nonidentical instances of a compartmental model for which a disease-free equilibrium and an endemic equilibrium
can coexist, this complex network exhibits a positively invariant region and presents a unique disease-free equilibrium
which is globally asymptotically stable,
under the assumption that each node composing the network admits a small basic reproduction number.
However, emerging equilibria are likely to appear if this assumption is not fulfilled,
and we proposed a numerical strategy in order to detect a near-optimal topology for which the level of infection is minimized.
This method has been applied to the case of the Cape Verde archipelago, and we exhibited a near-optimal topology which seems to be robust with respect to a variation of the coupling strength.
However, it seems delicate to identify the characteristic features of such a near-optimal topology, since weakly dense topologies can produce a high level of infection as well as limit the infection at a low level.

In a future work, we aim to deepen this subtle question, which could lead to
establishing a necessary and sufficient condition of synchronization in the network.
In parallel, we propose to improve our model by applying an optimal control process, in order to reach a global disease-free equilibrium,
in spite of the risk that a small group of nodes in the network could admit a high basic reproduction number.
This control process could be introduced at a double scale,
with control actions exerted into the dynamics of each node, and simultaneously control actions exerted along the connections of the network.

As a final perspective, we also intend to study of the effect of introducing delays in the migrations supported by the connections of the network,
since it is likely
to reveal new dynamics which might be hidden at this stage.

\section*{Acknowledgments}

The authors are very grateful to the anonymous reviewers whose comments greatly improved
the presentation of the paper.

This research was partially supported by the Portuguese Foundation for Science and Technology (FCT)
within projects UID/MAT/04106/2019 (CIDMA) and PTDC/EEI-AUT/2933/2014 (TOCCATTA),
co-funded by FEDER funds through COMPETE2020 -- Programa Operacional Competitividade e Internacionaliza\c{c}\~ao (POCI)
and by national funds (FCT). Silva is also supported by national funds (OE), through FCT, I.P.,
in the scope of the framework contract foreseen in the numbers 4, 5 and 6 of the article 23, of the Decree-Law 57/2016,
of August 29, changed by Law 57/2017, of July 19.

\section*{Conflict of Interest}
The authors declare there is no conflict of interest in this paper.

% ---------------------------------------------

\end{document}